\newcommand{\Ulabel}{\ensuremath{\mathrm{U}}}
\numberwithin{equation}{section}
\newtheorem{assumption}{Assumption}[section]
\newtheorem{lemma}{Lemma}[section]
\newtheorem{proposition}{Proposition}[section]
\newtheorem{corollary}{Corollary}[section]
\newtheorem{theorem}{Theorem}[section]
\theoremstyle{remark}
\newtheorem{remark}{Remark}[section]
\def\BibTeX{{\rm B\kern-.05em{\sc i\kern-.025em b}\kern-.08em
    T\kern-.1667em\lower.7ex\hbox{E}\kern-.125emX}}
\begin{document}

\title{qc-kmeans: A Quantum Compressive K-Means Algorithm for NISQ Devices}


\author{%
\IEEEauthorblockN{Pedro Chumpitaz-Flores\textsuperscript{*}}
\IEEEauthorblockA{%
\textit{University of South Florida}\\
Tampa, FL, USA\\
pedrochumpitazflores@usf.edu}
\and
\IEEEauthorblockN{My Duong\textsuperscript{*}}
\IEEEauthorblockA{%
\textit{University of South Florida}\\
Tampa, FL, USA\\
myduong@usf.edu}
\and
\IEEEauthorblockN{Ying Mao}
\IEEEauthorblockA{%
\textit{Fordham University}\\
New York, NY, USA\\
ymao41@fordham.edu}
\and
\IEEEauthorblockN{Kaixun Hua}
\IEEEauthorblockA{%
\textit{University of South Florida}\\
Tampa, FL, USA\\
khua@usf.edu}
\thanks{\textsuperscript{*}Equal  Contribution.}%
}

\maketitle

\begin{abstract}
Clustering on NISQ hardware is constrained by data loading and limited qubits. We present \textbf{qc-kmeans}, a hybrid compressive $k$-means that summarizes a dataset with a constant-size Fourier-feature sketch and selects centroids by solving small per-group QUBOs with shallow QAOA circuits. The QFF sketch estimator is unbiased with mean-squared error $O(\varepsilon^2)$ for $B,S=\Theta(\varepsilon^{-2})$, and the peak-qubit requirement
$q_{\text{peak}}=\max\{D,\lceil \log_2 B\rceil + 1\}$
does not scale with the number of samples. A refinement step with elitist retention ensures non-increasing surrogate cost.
In Qiskit Aer simulations (depth $p{=}1$), the method ran with $\le 9$ qubits on low-dimensional synthetic benchmarks and achieved competitive sum-of-squared errors relative to quantum baselines; runtimes are not directly comparable. On nine real datasets (up to $4.3\times 10^5$ points), the pipeline maintained constant peak-qubit usage in simulation. Under IBM noise models, accuracy was similar to the idealized setting. Overall, qc-kmeans offers a NISQ-oriented formulation with shallow, bounded-width circuits and competitive clustering quality in simulation.
\end{abstract}

\begin{IEEEkeywords}
Quantum clustering, Compressive $k$-Means, NISQ algorithms
\end{IEEEkeywords}

\section{Introduction} \label{sec:introduction}
Unsupervised $k$-means is a standard method for partitioning datasets into $k$ cohesive groups, widely used in data mining, knowledge discovery, and pattern recognition \cite{rao_cluster_1971, hartigan_algorithm_1979, jain_data_2010}. In its standard Lloyd iteration, each step computes distances from all $N$ points to $k$ centroids in $d$ dimensions, yielding a per-iteration cost of $O(kNd)$ \cite{lloyd_least_1982,jain_data_2010}. While many improvements have been proposed,
clustering large, high-dimensional, and multi-cluster datasets remains computationally demanding, motivating the search for new algorithmic paradigms.

Quantum computing has attracted interest for addressing clustering challenges \cite{saiphet2021quantum,chen2025provably}. Pure and hybrid approaches leverage quantum superposition and amplitude-estimation ideas to parallelize key subroutines and accelerate linear-algebraic steps such as eigenvalue estimation \cite{brassard2000quantum,lloyd2014qpca,kerenidis2021qspectral}. However, several proposals assume advanced data access (e.g., QPU, QRAM) and deep circuits suited to fault-tolerant regimes \cite{kerenidis2019q,chen2025provably,giovannetti2008qram,preskill2018quantum}. In practice, QRAM is unrealized on current hardware \cite{weiss2024quantum}, and today’s Noisy Intermediate-Scale Quantum (NISQ) devices have limited qubits and suffer decoherence and gate errors \cite{preskill2018quantum}. Moreover, while amplitude estimation can offer quadratic speedups for mean or distance estimation, standard variants require circuit depths that strain NISQ hardware, motivating low-depth or iterative alternatives \cite{brassard2000quantum,giurgica2022lowdepth,suzuki2020qae}. These constraints have spurred interest in hybrid approaches that pair small quantum processors with classical data reduction to keep the quantum footprint small \cite{harrow_small_2020,tomesh_coreset_2020,yogendran2024big}. 

\begin{figure}[htbp]
    \centering
    \includegraphics[width=\linewidth]{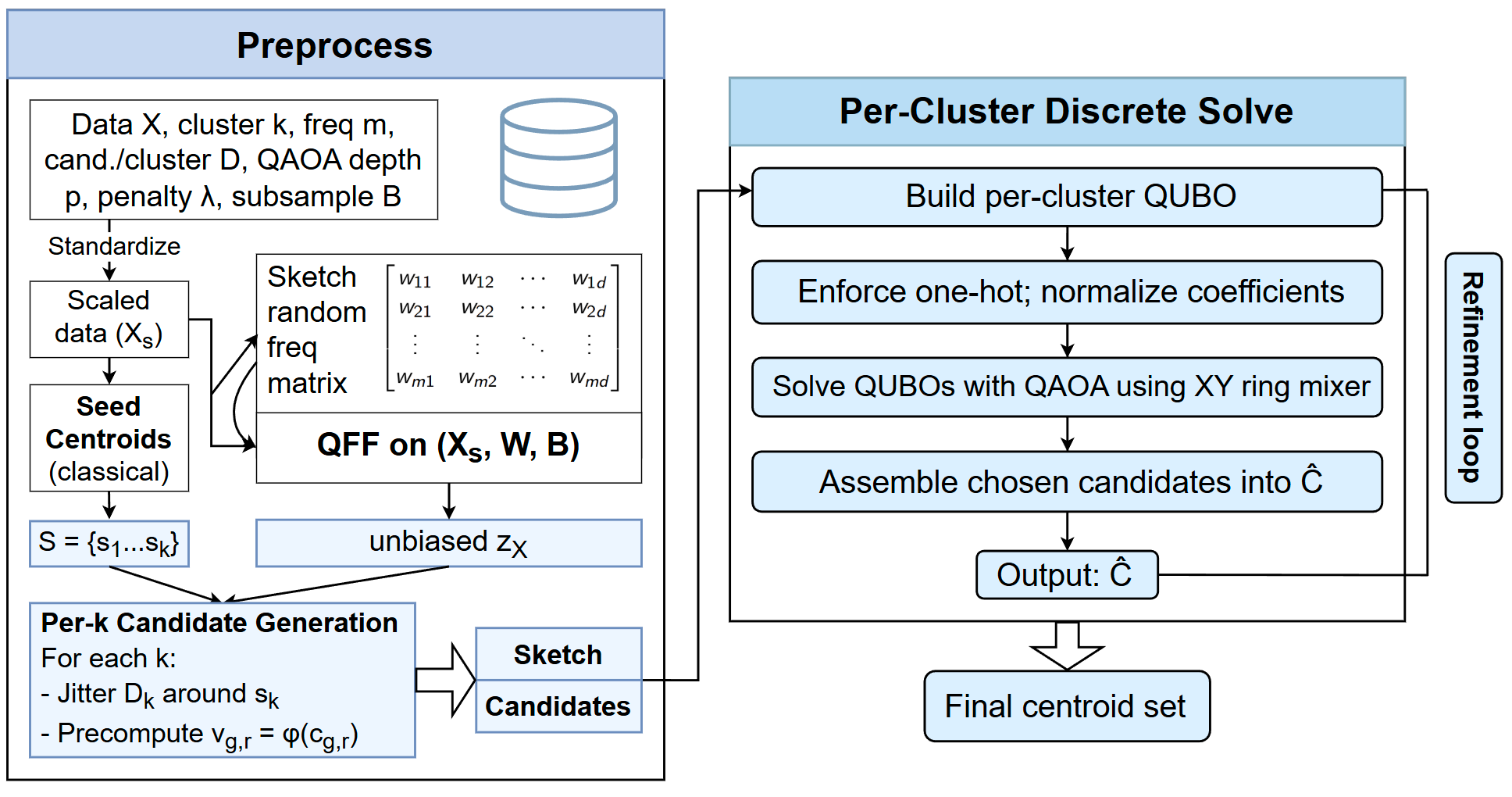}
    \caption{Qc-kmeans formulation}
    \label{fig:qckmeans_form}
\end{figure}

\textbf{Our Contributions:} We propose a hybrid \emph{Quantum Compressive $k$-Means (qc-kmeans)} algorithm for scalable quantum clustering on NISQ hardware. Qc-kmeans compresses the dataset into a compact sketch and runs quantum subroutines only on this reduced representation, trading approximation for practicality and reducing memory and qubit requirements. The method isolates a small quantum subproblem; \emph{comparisons of SSE and runtime are restricted to quantum baselines and small-QUBO references}, with classical baselines included only for context and without claims of superiority.

\section{Related Works} \label{sec:related_works}

$K$-means clustering has been an early target for quantum algorithmic design. \cite{kerenidis2019q} first developed the q-means algorithm, which mirrors the classical $k$-means but uses quantum linear-algebra subroutines to accelerate centroid updates and distance computations. The q-means algorithm encodes data points in quantum amplitude vectors, typically assuming they reside in QRAM for efficient access. Under these conditions, q-means can find approximate cluster centroids with high probability. Q-means and its improved variants \cite{doriguello2023you, chen2025provably} run in polylogarithmic time per iteration, providing an exponential speedup over the $O(kNd)$ classical cost. As discussed in §\ref{sec:introduction}, these speedups rely on assumptions such as QRAM and deeper circuits that are challenging for current NISQ devices.


To bridge the gap between algorithmic theory and today’s NISQ hardware, recent work has focused on hybrid quantum–classical clustering. \cite{fuchs2021efficient} used QAOA to obtain approximate solutions for weighted max-$k$-cut. Building on distance estimation via small SWAP-test circuits, \cite{DiAdamo2022Practical} analyzed noise pitfalls on NISQ devices and proposed a parallelized quantum $k$-means that reduces circuit depth by estimating multiple distances concurrently, improving performance on an energy-grid dataset. In a related vein, \cite{poggiali2024hybrid} presented hybrid quantum clustering that computes distances from a point to all $k$ centroids in superposition, yielding runtime gains for sufficiently large datasets without degrading clustering quality.

Orthogonal to quantum progress, classical ML tackles scale via summarization. \emph{Classical compressive $k$-means (CKM)} compresses a dataset into a fixed-size random-feature sketch, making clustering cost independent of $N$ \cite{Keriven2017Compressive}. Hybrid approaches build classical coresets to enable quantum clustering but currently handle only 2- or 3-means \cite{harrow_small_2020, tomesh_coreset_2020, yogendran2024big}. A quantum coreset builder runs in $\tilde O(\sqrt{N k},d^{3/2})$ time, yielding a quadratic speedup for large $N$ \cite{xue2023Near}. Despite these advances, scalability remains limited: fully quantum algorithms promise asymptotic gains but are not yet deployable \cite{yogendran2024big}, hybrids have shown only small-scale results (hundreds of points), and compressive methods are rarely combined with quantum speedups. This work addresses the gap with \emph{qc-kmeans}, integrating quantum algorithms, hybrid parallelism, and compressive learning to push NISQ-era clustering.


\section{Quantum Compressive $k$-Means via Fourier Features} \label{sec:ckm_problem}

\subsection{Classical $k$-Means}
Given a dataset \(X=\{x_1,\ldots,x_N\}\subset\mathbb{R}^d\) with \(N\) samples and \(d\) dimensions, the $k$-Means task assigns points to \(k\) clusters with centroids \(\{\mu_j\}_{j=1}^k\) that minimize the within-cluster sum of squares:
\(
\mathrm{WCSS} \;=\; \sum_{j=1}^{k} \sum_{x_i \in C_j} \bigl\|x_i - \mu_j\bigr\|_2^2 ,
\)
where \(\mu_j\) is the mean of points assigned to cluster \(j\). Lloyd’s heuristic \cite{lloyd_least_1982} has per-iteration cost \(O(kNd)\), which becomes prohibitively expensive as the dataset grows in size and dimensionality. To address this scalability bottleneck, random projections can be applied to lower computational cost while approximately preserving clustering structure.
 
\subsection{Quantum compressive $k$-Means}
Given \(X=\{x_1,\ldots,x_N\}\subset\mathbb{R}^d\), we construct a matrix of random frequencies as
\[
W
=
\begin{bmatrix}
w_{11} & w_{12} & \cdots & w_{1d} \\
w_{21} & w_{22} & \cdots & w_{2d} \\
\vdots & \vdots & \ddots & \vdots \\
w_{m1} & w_{m2} & \cdots & w_{md}
\end{bmatrix}
\in\mathbb{R}^{m\times d}.
\]
Each row vector \(w_j = (w_{j1},\ldots,w_{jd})^\top\) for \(j\in\{1,\ldots,m\}\) is independently sampled from the multivariate normal distribution \(\mathcal{N}(0,\sigma^2 I_d)\).
We then define the complex Fourier feature map
\(
\phi(x) \;=\; \exp\big(\mathbf{i}\,W x\big) \in \mathbb{C}^m,
\)
where the exponential is applied elementwise for all \(x\in\mathbb{R}^d\) and where \(\mathbf{i}^2=-1\). 
The dataset is summarized by the compressed feature mean:
\begin{equation}\label{eqn:features_data}
\begin{aligned}
z_X &= \frac{1}{N}\sum_{i=1}^N \phi(x_i)\in\mathbb{C}^m, \\
x_i &\in \mathbb{R}^d \quad \forall\, i\in\{1,\ldots,N\}.
\end{aligned}
\end{equation}

\noindent For a centroid set \(C=\{\mu_1,\ldots,\mu_k\}\subset\mathbb{R}^d\), its feature mean is
\begin{equation}\label{eqn:features_centroids}
\begin{aligned}
z_\mu &= \frac{1}{k}\sum_{g=1}^k \phi(\mu_g)\in\mathbb{C}^m, \\
\mu_g &\in \mathbb{R}^d \quad \forall\, g\in\{1,\ldots,k\}.
\end{aligned}
\end{equation}

\noindent Compressive \(k\)-means (CKM) \cite{Keriven2017Compressive} aligns \(z_\mu\) to \(z_X\) by minimizing \(\|z_X-z_\mu\|_2^2\).
\begingroup
\setlength{\jot}{1pt}
\begin{subequations}\label{eqn:ckm_base}
\begin{alignat}{3}
&\min_{\mu_1,\ldots,\mu_k \in \mathbb{R}^d} && \bigl\|\,z_X-\tfrac{1}{k}\sum_{g=1}^k \phi(\mu_g)\,\bigr\|_2^2 \\
&\text{subject to} \quad && \mu_g \in \mathbb{R}^d, && \forall\, g\in\{1,\ldots,k\}.
\end{alignat}
\end{subequations}
\endgroup

\noindent To obtain a quantum optimization model, we discretize the centroid search. For each cluster index \(g\in\mathcal{K}=\{1,\ldots,k\}\) let \(\mathcal{C}_g=\{c_{g,1},\ldots,c_{g,D_g}\}\subset\mathbb{R}^d\) be a finite candidate set, where each candidate \(c_{g,r}\in\mathbb{R}^d\) for \(r\in\{1,\ldots,D_g\}\), and introduce binary decision variables \(y_{g,r}\in\{0,1\}\) indicating the selection of \(c_{g,r}\). One-hot constraints enforce exactly one candidate per cluster:
\begin{subequations}\label{eqn:onehot_centroid}
\begin{align}
\sum_{r=1}^{D_g} y_{g,r} &= 1, && \forall\, g\in\mathcal{K},\\
y_{g,r} &\in \{0,1\}, && \forall\, g\in\mathcal{K},\ \forall\, r\in\{1,\ldots,D_g\}.
\end{align}
\end{subequations}
Let \(v_{g,r}=\varphi(c_{g,r})=\exp(\mathrm{i}\,W c_{g,r})\in\mathbb{C}^m\) for all \(g\in\mathcal{K}\) and \(r\in\{1,\ldots,D_g\}\). Then
\begin{equation}\label{eqn:zmu_linear}
z_\mu(y) \;=\; \frac{1}{k}\sum_{g=1}^k\sum_{r=1}^{D_g} y_{g,r}\, v_{g,r}.
\end{equation}

\noindent Expanding \(\|z_X-z_\mu(y)\|_2^2\) and dropping the constant \(\|z_X\|_2^2\) yields a real quadratic polynomial in \(\{y_{g,r}\}\). With the complex (Hermitian) inner product \(\langle a,b\rangle:=\sum_{j=1}^m \overline{a_j}\,b_j\), define
\begin{equation}\label{eqn:coeffs}
\begin{aligned}
b_{g,r} &= -\frac{2}{k}\,\mathrm{Re}\,\langle z_X, v_{g,r}\rangle, \\
Q_{(g,r),(g',r')} &= \frac{1}{k^2}\,\mathrm{Re}\,\langle v_{g,r}, v_{g',r'}\rangle .
\end{aligned}
\end{equation}

\noindent Embedding the one-hot constraints as penalties with a parameter \(\lambda>0\) leads to the QUBO Hamiltonian
\begin{subequations}\label{eqn:qubo_ckm}
\begin{align}
\begin{split}
H_{\mathrm{fit}}(y)
&= \sum_{g,r} b_{g,r}\, y_{g,r} \\
&\quad+\sum_{(g,r)}\sum_{(g',r')} Q_{(g,r),(g',r')}\, y_{g,r}\, y_{g',r'}
\end{split}\\
H_{\mathrm{one\text{-}hot}}(y)
&= \lambda \sum_{g\in\mathcal{K}}\Bigl(1-\sum_{r=1}^{D_g} y_{g,r}\Bigr)^2,\\
H_{\mathrm{CKM}}(y) &= H_{\mathrm{fit}}(y) + H_{\mathrm{one\text{-}hot}}(y).
\end{align}
\end{subequations}

A sufficient choice ensuring any infeasible (non one-hot) assignment has higher energy than all feasible ones is
\begin{equation}\label{eqn:lambda}
\lambda \;>\; \sum_{g,r} |b_{g,r}| \;+\; \sum_{(g,r)\ne(g',r')} |Q_{(g,r),(g',r'})| \;+\; \varepsilon,
\end{equation}
for any fixed \(\varepsilon>0\).
To make this bound practical and avoid overly conservative values of \(\lambda\), the coefficients are normalized before the penalty is set. Specifically, let
\[
S_{\text{coef}} \;:=\; \sum_{g,r} |b_{g,r}| \;+\; \sum_{(g,r),(g',r')} |Q_{(g,r),(g',r'})| .
\]
All coefficients are rescaled as \(b_{g,r}\leftarrow b_{g,r}/S_{\text{coef}}\) and \(Q_{(g,r),(g',r')}\leftarrow Q_{(g,r),(g',r')}/S_{\text{coef}}\) (with \(S_{\text{coef}}{=}1\) if the sum is zero), and then the penalty is fixed to \(\lambda = 1+\varepsilon\) (by default, \(\varepsilon=10^{-3}\)).
Because the off-diagonal sum of \(|Q|\) is strictly smaller than the total sum used in \(S_{\text{coef}}\), the inequality \eqref{eqn:lambda} is automatically satisfied for any \(\varepsilon>0\).

Up to an additive constant, the Hamiltonian admits the matrix form
\begin{equation}\label{eqn:matrix_form}
H_{\mathrm{CKM}}(y) \;=\; y^\top Q\, y \;+\; c^\top y \;+\; \lambda \sum_{g}\Bigl(1-\sum_{r} y_{g,r}\Bigr)^2 ,
\end{equation}
where \(y\in\{0,1\}^{n}\) stacks all \(y_{g,r}\) with \(n=\sum_{g=1}^k D_g\), \(Q\) is the real symmetric matrix with entries \(Q_{(g,r),(g',r')}\), and \(c\) stacks \(\{b_{g,r}\}\) in the same order. For quantum optimization we deploy a QAOA ansatz with a one-hot mixer that is invariant on the subspace defined by \(\sum_{r=1}^{D_g} y_{g,r}=1\) for each \(g\in\mathcal{K}\), so that the penalty Hamiltonian remains inactive throughout the variational evolution.

\begin{figure*}[t]
\centering

\begin{minipage}[t]{0.485\textwidth}
\centering
\begin{quantikz}[row sep=0.3cm, column sep=0.7cm]
  \lstick{$\ket{0}$} & \gate{H} & \gate[wires=3]{\Ulabel} & \qw & \qw \\
  \lstick{$\ket{0}$} & \gate{H} & \ghost{\Ulabel}        & \qw & \qw \\
  \lstick{$\ket{0}$} & \gate{H} & \ghost{\Ulabel}        & \qw & \qw \\
  \lstick{$\ket{0}$} & \gate{H} & \ctrl{-3}              & \gate{H} & \meter{} \\
\end{quantikz}

\vspace{0.35em}
\footnotesize\emph{(a) Real part.}
\end{minipage}
\hfill
\begin{minipage}[t]{0.485\textwidth}
\centering
\begin{quantikz}[row sep=0.3cm, column sep=0.7cm]
  \lstick{$\ket{0}$} & \gate{H} & \gate[wires=3]{\Ulabel} & \qw & \qw & \qw \\
  \lstick{$\ket{0}$} & \gate{H} & \ghost{\Ulabel}        & \qw & \qw & \qw \\
  \lstick{$\ket{0}$} & \gate{H} & \ghost{\Ulabel}        & \qw & \qw & \qw \\
  \lstick{$\ket{0}$} & \gate{H} & \ctrl{-3}              & \gate{S^\dagger} & \gate{H} & \meter{} \\
\end{quantikz}

\vspace{0.35em}
\footnotesize\emph{(b) Imaginary part.}
\end{minipage}

\caption{Hadamard-test circuits for estimating the complex quantity 
$\mu^{(j)}_M=\langle \psi \mid U \mid \psi \rangle$. 
(a) \emph{Real part:} the $n_i=3$ index qubits are prepared in a uniform superposition and control the diagonal oracle $U=\mathrm{diag}(e^{i\theta_1},\dots,e^{i\theta_B},1,\dots,1)$. 
Measuring the ancilla in the $X$ basis yields $\Re\,\mu^{(j)}_M=\mathbb{E}[(-1)^Z]$. 
(b) \emph{Imaginary part:} inserting $S^\dagger$ on the ancilla before the final $H$ implements a $Y$-basis measurement, giving $\Im\,\mu^{(j)}_M=\mathbb{E}[(-1)^Z]$ under our sign convention.}
\label{fig:qff-both}
\end{figure*}
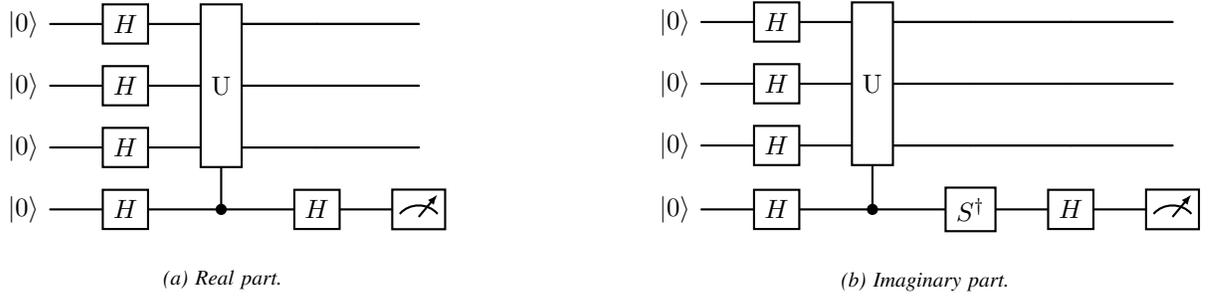

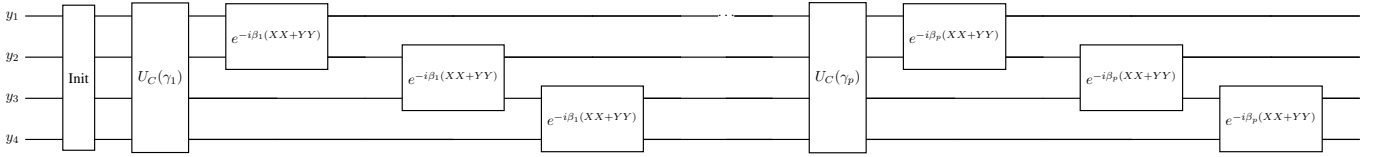
\begin{figure*}[t]
\centering
\scalebox{0.55}{
\begin{quantikz}[row sep=0.35cm, column sep=0.9cm]
  \lstick{$y_1$} & \gate[wires=4]{\text{Init}} & \gate[wires=4]{U_C(\gamma_1)}
    & \gate[2]{e^{-i\beta_1(XX{+}YY)}} & \qw
    & \qw & \qw & \qw
    & \push{\cdots} & \qw
    & \gate[wires=4]{U_C(\gamma_p)}
    & \gate[2]{e^{-i\beta_p(XX{+}YY)}} & \qw
    & \qw & \qw & \qw \\
  \lstick{$y_2$} & \ghost{\text{Init}} & \ghost{U_C(\gamma_1)}
    & \ghost{e^{-i\beta_1(XX{+}YY)}} & \qw
    & \gate[2]{e^{-i\beta_1(XX{+}YY)}} & \qw & \qw
    & \push{} & \qw
    & \ghost{U_C(\gamma_p)}
    & \ghost{e^{-i\beta_p(XX{+}YY)}} & \qw
    & \gate[2]{e^{-i\beta_p(XX{+}YY)}} & \qw & \qw \\
  \lstick{$y_3$} & \ghost{\text{Init}} & \ghost{U_C(\gamma_1)}
    & \qw & \ghost{e^{-i\beta_1(XX{+}YY)}}
    & \ghost{e^{-i\beta_1(XX{+}YY)}} & \gate[2]{e^{-i\beta_1(XX{+}YY)}} & \qw
    & \push{} & \qw
    & \ghost{U_C(\gamma_p)} & \qw
    & \ghost{e^{-i\beta_p(XX{+}YY)}} & \ghost{e^{-i\beta_p(XX{+}YY)}}
    & \gate[2]{e^{-i\beta_p(XX{+}YY)}} & \qw \\
  \lstick{$y_4$} & \ghost{\text{Init}} & \ghost{U_C(\gamma_1)}
    & \qw & \qw & \qw
    & \ghost{e^{-i\beta_1(XX{+}YY)}} & \qw
    & \push{} & \qw
    & \ghost{U_C(\gamma_p)} & \qw & \qw & \qw
    & \ghost{e^{-i\beta_p(XX{+}YY)}} & \qw \\
\end{quantikz}
}
\caption{Depth-$p$ per-group QAOA for $D_g=4$. Each layer $\ell$ applies $U_C(\gamma_\ell)$
followed by the XY mixer in two sublayers of adjacent pairs: $(1,2)$ and $(3,4)$, then $(2,3)$.
For a complete ring, the pair $(4,1)$ can be implemented with a SWAP network or an additional column.}
\label{fig:qaoa-group-p}
\end{figure*}

\section{Quantum Variational Framework} \label{sec:qvf}

We estimate the feature-space centroid $z_X$ introduced in Section~\ref{sec:ckm_problem} using subsampled Quantum Fourier Features (QFF) to obtain unbiased estimates of each component with reduced qubit counts.

\begin{assumption}[Bounded data and uniform subsampling]\label{asmp:bounded}
The dataset $X$ lies within a bounded subset of $\mathbb{R}^d$ and, when estimating $z_X$ via QFF, subsampling is uniform without replacement.
\end{assumption}

\noindent
For the specific case of QFF with complex phases $V_i=e^{\mathrm{i}w^\top x_i}$ one has $\lvert V_i\rvert=1$, so population variances are bounded. Assumption~\ref{asmp:bounded} is not strictly needed for the variance bounds below in this setting, but we keep it to cover more general feature maps.

\subsection{Quantum estimation of $z_X$ via subsampled QFF}\label{subsec:qff}
Each component of $z_X$ in \eqref{eqn:features_data} is the empirical mean of phases $e^{\mathrm{i}w_j^\top x_i}$. 
We estimate these means on quantum hardware with a Hadamard-test circuit using a controlled diagonal oracle 
$\mathrm{diag}\!\big(e^{\mathrm{i}w_j^\top x_1},\ldots,e^{\mathrm{i}w_j^\top x_N}\big)$.
To keep the index register small on NISQ devices, we subsample $B\ll N$ items \emph{per feature component} and pad to a power of two: let
\[
n_i=\max\{1,\lceil\log_2 B\rceil\},\qquad M=2^{n_i}\ (\ge 2).
\]
We then run two Hadamard tests per component (one per measurement basis) and combine them into a single complex quantity
\begin{align}
\mu^{(j)}_M &= \langle \psi|U|\psi\rangle \in \mathbb{C}, \\
\Re\,\mu^{(j)}_M &= \mathbb{E}_{\text{X-basis}}[(-1)^Z], \\
\Im\,\mu^{(j)}_M &= \mathbb{E}_{\text{Y-basis}}[(-1)^Z],
\end{align}
where $Z\in\{0,1\}$ is the ancilla measurement in a Hadamard test with control-unitary $U$ (defined below) and index-register state $\lvert \psi\rangle$. 
Real and imaginary parts are obtained by measuring the ancilla in the $X$- and $Y$-basis, respectively; operationally, we apply an $S^\dagger$ on the ancilla \emph{before} the final Hadamard to measure in the $Y$-basis for the imaginary part, ensuring the sign convention $\langle Y\rangle = \Im\langle\psi|U|\psi\rangle$. The feature-compression register uses $n_i$ index qubits plus one ancilla, so its peak is $n_i+1$ qubits. Statistical error decreases with shots, while subsampling contributes $\mathcal{O}(B^{-1/2})$ Monte Carlo standard error. Throughout, we do not assume QRAM; the diagonal unitary $U$ is compiled from classically computed phases for the chosen subsample.
The unbiased estimator for the $j$-th component is
\begin{equation}\label{eq:qff_estimator}
\hat z_X^{(j)} \;=\; \frac{M}{B}\, \mu^{(j)}_M \;-\; \frac{M-B}{B},
\end{equation}
with $\mu^{(j)}_M$ defined as above from the two runs. 

\begin{proposition}[Unbiased QFF estimator]\label{prop:unbiased}
With padding to $M=2^{\lceil\log_2 B\rceil}$ (and $n_i\ge 1$) and the estimator in \eqref{eq:qff_estimator}, 
$\mathbb{E}\,\hat z_X^{(j)}= z_X^{(j)}$ for both real and imaginary parts.
\end{proposition}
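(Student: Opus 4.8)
The plan is to separate the two sources of randomness — the quantum measurement shots and the random subsample — and verify unbiasedness at each level via the tower property. First I would invoke the standard Hadamard-test identity underlying Figure~\ref{fig:qff-both}: with the ancilla prepared in $|{+}\rangle$ controlling $U$, an $X$-basis measurement yields shot average $\mathbb{E}[(-1)^Z]=\mathrm{Re}\,\mu^{(j)}_M$, while inserting $S^\dagger$ before the final Hadamard gives a $Y$-basis measurement with $\mathbb{E}[(-1)^Z]=\mathrm{Im}\,\mu^{(j)}_M$, where $\mu^{(j)}_M=\langle\psi|U|\psi\rangle$. Hence, conditioned on a fixed subsample, the measured quantity is an unbiased estimator of the deterministic complex number $\mu^{(j)}_M$, and it suffices to analyze $\mu^{(j)}_M$ itself.

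Next I would compute $\mu^{(j)}_M$ in closed form. The index register is the uniform superposition $|\psi\rangle=M^{-1/2}\sum_{k=0}^{M-1}|k\rangle$, and $U=\mathrm{diag}(e^{\mathrm{i}\theta_1},\dots,e^{\mathrm{i}\theta_B},1,\dots,1)$ is diagonal, with its first $B$ entries carrying the subsampled phases $\theta_b=w_j^\top x_{i_b}$ and the remaining $M-B$ padding entries equal to $1$. Since $U$ is diagonal, $\mu^{(j)}_M=M^{-1}\sum_{k=0}^{M-1}U_{kk}=M^{-1}\bigl(\sum_{b=1}^{B}e^{\mathrm{i}\theta_b}+(M-B)\bigr)$.

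The crux is then a deterministic cancellation: substituting this expression into the estimator~\eqref{eq:qff_estimator} gives
\begin{equation*}
\hat z_X^{(j)} \;=\; \frac{M}{B}\cdot\frac{1}{M}\Bigl(\sum_{b=1}^{B}e^{\mathrm{i}\theta_b}+(M-B)\Bigr)-\frac{M-B}{B} \;=\; \frac{1}{B}\sum_{b=1}^{B}e^{\mathrm{i}\theta_b},
\end{equation*}
so the padding contribution is exactly annihilated by the correction term $-(M-B)/B$, and conditioned on the subsample the estimator reduces to the empirical mean of the $B$ subsampled phases. Finally I would take expectation over the uniform-without-replacement draw: by symmetry each index $i$ appears with marginal probability $B/N$, so linearity of expectation yields $\mathbb{E}\bigl[\tfrac{1}{B}\sum_{b}e^{\mathrm{i}\theta_b}\bigr]=\tfrac{1}{N}\sum_{i=1}^{N}e^{\mathrm{i}w_j^\top x_i}=z_X^{(j)}$, the standard unbiasedness of the sample mean under sampling without replacement. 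Because the real and imaginary parts enter linearly and are each produced by their respective Hadamard test, the identity holds for the full complex estimator.

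I expect the only delicate point to be bookkeeping rather than genuine difficulty: one must confirm that the $S^\dagger$-before-$H$ sign convention indeed delivers $\mathbb{E}[(-1)^Z]=\mathrm{Im}\,\mu^{(j)}_M$ and not its negative, and that the deterministic padding and the stochastic subsample are composed in the correct order, so that the factorization $\mathbb{E}=\mathbb{E}_{\text{subsample}}\,\mathbb{E}_{\text{shots}}$ applies cleanly and the exact cancellation of the $M-B$ padding term is what makes the affine rescaling in~\eqref{eq:qff_estimator} unbiased.
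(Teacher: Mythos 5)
Your proposal is correct and follows essentially the same route as the paper's proof: computing $\mu^{(j)}_M=\tfrac{1}{M}\bigl(\sum_{b}e^{\mathrm{i}\theta_b}+(M-B)\bigr)$ from the diagonal oracle, observing that the affine rescaling in \eqref{eq:qff_estimator} exactly cancels the padding term so that $\hat z_X^{(j)}$ reduces to the subsample mean, and then combining unbiasedness of sampling without replacement with unbiasedness of the Hadamard-test shot noise. Your explicit invocation of the tower property $\mathbb{E}=\mathbb{E}_{\text{subsample}}\,\mathbb{E}_{\text{shots}}$ merely makes precise the conditioning structure the paper leaves implicit; no substantive difference.
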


\begin{proof}[Proof of Proposition~\ref{prop:unbiased}]
Fix a feature index $j$ and let $\theta_\ell = w_j^\top x_{i_\ell}$ denote the phases of a uniformly drawn subsample without replacement of size $B$.
We construct $U=\mathrm{diag}(e^{\mathrm{i}\theta_1},\ldots,e^{\mathrm{i}\theta_B}, \underbrace{1,\ldots,1}_{M-B})\in\mathbb{C}^{M\times M}$ and prepare the index register in 
$\lvert \psi\rangle = \frac{1}{\sqrt{M}}\sum_{m=1}^M \lvert m\rangle$.
The Hadamard test with an ancilla yields
\[
\mathbb{E}[(-1)^Z]
=\begin{cases}
\mathrm{Re}\,\langle \psi|U|\psi\rangle, & \text{(no $S$)}\\[2pt]
\mathrm{Im}\,\langle \psi|U|\psi\rangle, & \text{(with $S^\dagger$ before the final $H$).}
\end{cases}
\]
Combining both runs (real and imaginary parts), we obtain
\[
\mu_M \;=\; \langle \psi|U|\psi\rangle
= \frac{1}{M}\!\left(\sum_{\ell=1}^B e^{\mathrm{i}\theta_\ell} + (M-B)\right).
\]
By the definition of the estimator,
\[
\hat z_X^{(j)} \;=\; \frac{M}{B}\,\mu_M \;-\; \frac{M-B}{B}
= \frac{1}{B}\sum_{\ell=1}^B e^{\mathrm{i}\theta_\ell},
\]
which is exactly the sample mean (over the subsample) of the true phases.
Since the subsample is drawn uniformly without replacement, 
\[
\mathbb{E}_{\text{subsamp.}}\!\left[\frac{1}{B}\sum_{\ell=1}^B e^{\mathrm{i}w_j^\top x_{i_\ell}}\right]
= \frac{1}{N}\sum_{i=1}^N e^{\mathrm{i}w_j^\top x_i}
= z_X^{(j)}.
\]
Finally, the shot noise of the Hadamard test is unbiased (the expectation of the estimate of $\mu_M$ equals $\mu_M$), hence 
$\mathbb{E}\,\hat z_X^{(j)}=z_X^{(j)}$ for both real and imaginary parts. 
\end{proof}

\begin{lemma}[Shot-noise variance]\label{lem:shots}
Let $\widehat{\mu}^{(j)}_M$ be estimated from $S$ Hadamard-test shots \emph{per measurement basis} (real or imaginary part). Then
$\mathrm{Var}(\widehat{\mu}^{(j)}_M)\le S^{-1}$ and 
$\mathrm{Var}\!\left(\frac{M}{B}\widehat{\mu}^{(j)}_M\right)\le \frac{M^2}{B^2 S}$ per component.
\end{lemma}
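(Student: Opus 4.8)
The plan is to reduce the claim to the elementary fact that an empirical mean of $S$ independent, bounded $\pm 1$ outcomes has variance decaying as $1/S$, and then to propagate the deterministic affine rescaling by $M/B$. The only modeling input I need is the per-shot output of the Hadamard test, which is already fixed by the construction in the proof of Proposition~\ref{prop:unbiased}.

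First I would make the estimator explicit. For a fixed measurement basis ($X$ for the real part, $Y$ for the imaginary part), each of the $S$ Hadamard-test shots returns an ancilla bit $Z_s\in\{0,1\}$, and the corresponding signed outcome is $A_s=(-1)^{Z_s}\in\{-1,+1\}$. The $S$ shots are independent and identically distributed (repeated runs of the same prepared circuit), and the part of $\mu^{(j)}_M$ under consideration is estimated by the empirical mean $\widehat{\mu}^{(j)}_M=\tfrac{1}{S}\sum_{s=1}^S A_s$, with $\mathbb{E}[A_s]=\Re\,\mu^{(j)}_M$ or $\Im\,\mu^{(j)}_M$ as appropriate.

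Next I would bound the single-shot variance. Because $A_s^2=1$ identically, $\mathrm{Var}(A_s)=\mathbb{E}[A_s^2]-(\mathbb{E}[A_s])^2=1-(\mathbb{E}[A_s])^2\le 1$, the bound saturating only when the mean vanishes. Independence across shots then gives $\mathrm{Var}(\widehat{\mu}^{(j)}_M)=S^{-2}\sum_{s=1}^S \mathrm{Var}(A_s)\le S^{-2}\cdot S=S^{-1}$, which is the first inequality. The scaled bound follows immediately from the affine-scaling identity $\mathrm{Var}(cY)=c^2\,\mathrm{Var}(Y)$ applied with the positive real constant $c=M/B$, yielding $\mathrm{Var}\!\bigl(\tfrac{M}{B}\widehat{\mu}^{(j)}_M\bigr)=\tfrac{M^2}{B^2}\,\mathrm{Var}(\widehat{\mu}^{(j)}_M)\le \tfrac{M^2}{B^2 S}$.

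This argument is essentially routine, so the main thing to handle carefully is the bookkeeping of the complex estimator rather than any hard estimate. Since the real and imaginary parts are measured in separate bases from separate shot budgets, I would treat each part's estimator as a real-valued sample mean and state the bound \emph{per part}, consistent with the lemma's phrasing ``per measurement basis.'' One could instead combine the two parts and invoke unitarity ($|\mu^{(j)}_M|\le 1$) to write the total complex variance as $\mathrm{Var}(\widehat{\Re})+\mathrm{Var}(\widehat{\Im})=(2-|\mu^{(j)}_M|^2)/S$, but this gives $2/S$ rather than $1/S$; hence the per-part reading is the one that matches the stated constant, and I would flag this explicitly to avoid a factor-of-two ambiguity.
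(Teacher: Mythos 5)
Your proof is correct and takes essentially the same route as the paper's: model each shot as a $\pm 1$ outcome with single-shot variance $1-(\mathbb{E}[A_s])^2\le 1$, use independence to get $\mathrm{Var}(\widehat{\mu}^{(j)}_M)\le S^{-1}$, and apply the affine scaling $\mathrm{Var}(cY)=c^2\,\mathrm{Var}(Y)$ with $c=M/B$. Your per-basis bookkeeping is in fact slightly cleaner than the paper's (which writes $\mathbb{E}[Y]=\mu_M$ and $\mathrm{Var}(Y)=1-\mu_M^2$ even though the per-shot expectation is only the real or imaginary part of $\mu^{(j)}_M$), and the factor-of-two caveat you flag is exactly how the constant $2$ resurfaces later in Corollary~\ref{cor:mse}.
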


\begin{proof}[Proof of Lemma~\ref{lem:shots}]
In a single shot, $Z\in\{0,1\}$ and $Y:=(-1)^Z\in\{-1,+1\}$ with $\mathbb{E}[Y]=\mu_M$.
From $S$ shots, the estimator $\widehat{\mu}_M=\frac{1}{S}\sum_{s=1}^S Y_s$ satisfies
\[
\mathrm{Var}(\widehat{\mu}_M)=\frac{1}{S}\,\mathrm{Var}(Y)\le \frac{1}{S},
\]
since $\mathrm{Var}(Y)=1-\mu_M^2\le 1$. 
Multiplying by $M/B$ gives 
$\mathrm{Var}\!\left(\frac{M}{B}\widehat{\mu}_M\right)\le \frac{M^2}{B^2 S}$ per real/imag component.
\end{proof}

\begin{lemma}[Subsampling variance]\label{lem:subsample}
Under Assumption~\ref{asmp:bounded}, the subsampling induces 
\[
\mathrm{Var}\!\left(\frac{M}{B}\mu^{(j)}_M - \frac{M-B}{B}\right)\;\le\; \frac{c_j}{B},
\]
with a constant $c_j$ depending on the phase dispersion (finite-population correction applies).
\end{lemma}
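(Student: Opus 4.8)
The plan is to recognize that the random quantity in question is, up to the deterministic affine rescaling already carried out in the proof of Proposition~\ref{prop:unbiased}, exactly the without-replacement subsample mean of the unit-modulus phases. Concretely, conditioning on exact evaluation of the inner product (the orthogonal shot noise is quantified separately in Lemma~\ref{lem:shots}), the identity
\[
\frac{M}{B}\mu^{(j)}_M-\frac{M-B}{B}=\frac{1}{B}\sum_{\ell=1}^B V_{i_\ell},\qquad V_i:=e^{\mathrm{i}w_j^\top x_i},
\]
reduces the claim to bounding the variance of a sample mean drawn uniformly without replacement from the finite population $\{V_1,\ldots,V_N\}$.

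First I would set up the finite-population moments, treating the real and imaginary parts as two real-valued populations. Let $\bar V=z_X^{(j)}$ denote the population mean and $\sigma_j^2=\tfrac{1}{N}\sum_{i=1}^N|V_i-\bar V|^2$ the population variance of the phases for feature $j$; because $|V_i|=1$ we immediately have $\sigma_j^2\le 1$. Under uniform sampling without replacement, each draw $V_{i_\ell}$ is marginally uniform over the population, so $\mathrm{Var}(V_{i_\ell})=\sigma_j^2$, while distinct draws are negatively correlated with the standard value $\mathrm{Cov}(V_{i_\ell},V_{i_{\ell'}})=-\sigma_j^2/(N-1)$ for $\ell\ne\ell'$.

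Next I would assemble the variance of the subsample mean by expanding the double sum over the $B$ draws:
\[
\mathrm{Var}\!\Big(\tfrac{1}{B}\sum_{\ell}V_{i_\ell}\Big)
=\frac{1}{B^2}\Big[B\,\sigma_j^2-B(B-1)\tfrac{\sigma_j^2}{N-1}\Big]
=\frac{\sigma_j^2}{B}\cdot\frac{N-B}{N-1}.
\]
This is the classical finite-population correction: the factor $(N-B)/(N-1)\le 1$ encodes that without-replacement sampling is never more variable than i.i.d.\ sampling. Setting $c_j:=\sigma_j^2\,(N-B)/(N-1)$ (equivalently absorbing the correction into $c_j\le\sigma_j^2\le 1$) yields $\mathrm{Var}\le c_j/B$, and the same argument applied to each of the real and imaginary components gives the stated per-component bound.

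The only real obstacle is the pairwise-covariance step. The value $-\sigma_j^2/(N-1)$ must be derived from exchangeability of the without-replacement draws together with the identity $\sum_i(a_i-\bar a)=0$ valid for any real population $\{a_i\}$ (here the real or imaginary part of $V_i$), which forces $\sum_{i\ne i'}(a_i-\bar a)(a_{i'}-\bar a)=-\sum_i(a_i-\bar a)^2$ and hence, after dividing by the $N(N-1)$ ordered distinct pairs, the claimed negative correlation. Everything else is bookkeeping, and the unit-modulus bound $\sigma_j^2\le 1$ makes $c_j$ a universal constant if one prefers a sampling-free statement.
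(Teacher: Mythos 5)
Your proof is correct and follows essentially the same route as the paper: both identify $\frac{M}{B}\mu^{(j)}_M-\frac{M-B}{B}$ with the without-replacement subsample mean of the unit-modulus phases $V_i=e^{\mathrm{i}w_j^\top x_i}$ and invoke the finite-population variance formula with correction factor, bounding the population variance via $|V_i|=1$. The only cosmetic differences are that you derive the correction formula from exchangeability and the zero-sum covariance identity rather than citing it, and you normalize the population variance by $N$ instead of $N-1$, which yields the slightly tighter constant $c_j\le 1$ versus the paper's $c_j\le \tfrac{N}{N-1}\le 2$.
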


\begin{proof}[Proof of Lemma~\ref{lem:subsample}]
The quantity $\frac{M}{B}\mu^{(j)}_M-\frac{M-B}{B}$ coincides exactly with the sample mean without replacement 
\(
\bar{V}_B=\frac{1}{B}\sum_{\ell=1}^B V_{i_\ell}
\)
of the finite population $\{V_i\}_{i=1}^N$ with $V_i=e^{\mathrm{i}w_j^\top x_i}$. For a finite population and uniform sampling without replacement,
\[
\mathrm{Var}(\bar{V}_B)=\frac{1-f}{B}\,S_V^2, 
\qquad f=\frac{B}{N},
\]
where $S_V^2=\frac{1}{N-1}\sum_{i=1}^N \lvert V_i-\bar V\rvert^2$ is the (complex) population variance in squared norm and $\bar V=\frac{1}{N}\sum_{i=1}^N V_i$. Hence,
\[
\mathrm{Var}(\bar{V}_B)\;=\;\frac{1-f}{B}\,S_V^2 \;\le\; \frac{S_V^2}{B}\;=\;\frac{c_j}{B},
\]
with the \emph{constant} $c_j:=S_V^2$ independent of $B$. In particular, since $|V_i|=1$,
\[
S_V^2 \;=\; \frac{N}{N-1}\Big(1-|\bar V|^2\Big)
\;\;\le\;\; \frac{N}{N-1}
\;\;\le\;\; 2,
\]
so $c_j=\frac{N}{N-1}\big(1-|\bar V|^2\big)$ and $\mathrm{Var}(\bar{V}_B)\le \tfrac{1}{B}\cdot \tfrac{N}{N-1}$.
\end{proof}

\begin{corollary}[MSE decomposition and sample complexity]\label{cor:mse}
$\mathrm{MSE}(\hat z_X^{(j)})\le \frac{c_j}{B}+\frac{2M^2}{B^2 S}$, where the factor $2$ accounts for real and imaginary parts measured separately with $S$ shots each. 
Choosing $B=\Theta(\varepsilon^{-2})$ and $S=\Theta(\varepsilon^{-2})$ yields 
$\mathrm{MSE}(\hat z_X^{(j)})=\mathcal{O}(\varepsilon^2)$.
\end{corollary}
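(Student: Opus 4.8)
The plan is to exploit the fact that $\hat z_X^{(j)}$ is unbiased (Proposition~\ref{prop:unbiased}), so that its mean-squared error coincides with its variance, and then to decompose that variance into the two independent noise sources already controlled by Lemmas~\ref{lem:shots} and~\ref{lem:subsample}. Concretely, I would treat the complex estimator through its real and imaginary parts, writing $\mathrm{MSE}(\hat z_X^{(j)})=\mathbb{E}\lvert \hat z_X^{(j)}-z_X^{(j)}\rvert^2=\mathbb{E}[(\Re\hat z_X^{(j)}-\Re z_X^{(j)})^2]+\mathbb{E}[(\Im\hat z_X^{(j)}-\Im z_X^{(j)})^2]$, and observe that each part is unbiased so that each summand equals a variance.

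First I would apply the law of total variance to each part, conditioning on the realized subsample $\{i_1,\dots,i_B\}$. The inner conditional expectation of the estimator is exactly the subsample mean $\bar V_B=\frac{1}{B}\sum_\ell e^{\mathrm{i}w_j^\top x_{i_\ell}}$ (by the unbiasedness of the Hadamard-test shot noise, as in the proof of Proposition~\ref{prop:unbiased}). Hence the variance splits as an expected conditional (shot-noise) variance plus the variance of $\bar V_B$ over the subsampling. For the shot-noise term, Lemma~\ref{lem:shots} bounds the conditional variance of $\frac{M}{B}\widehat\mu^{(j)}_M$ by $M^2/(B^2S)$ in each measurement basis; since the real and imaginary parts are estimated from independent batches of $S$ shots, summing the two parts yields the factor-two term $2M^2/(B^2S)$, and taking the outer expectation over the subsample preserves this uniform bound. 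For the subsampling term, the sum of the real and imaginary subsampling variances is precisely $\mathrm{Var}(\bar V_B)$, which Lemma~\ref{lem:subsample} bounds by $c_j/B$. Adding the two contributions gives the claimed decomposition $\mathrm{MSE}(\hat z_X^{(j)})\le \frac{c_j}{B}+\frac{2M^2}{B^2S}$.

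For the sample-complexity claim I would substitute the stated scalings. With $B=\Theta(\varepsilon^{-2})$ the first term is $c_j/B=\mathcal{O}(\varepsilon^2)$. The only point requiring care is the $M^2$ factor in the second term: since the padding rule sets $M=2^{\lceil\log_2 B\rceil}$, one has $B\le M<2B$, so $M^2/B^2<4$ and the second term obeys $2M^2/(B^2S)<8/S$. Taking $S=\Theta(\varepsilon^{-2})$ then makes this term $\mathcal{O}(\varepsilon^2)$ as well, and the two bounds combine to $\mathrm{MSE}(\hat z_X^{(j)})=\mathcal{O}(\varepsilon^2)$.

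The main obstacle is conceptual rather than computational: getting the conditioning and independence structure right so that the two lemmas compose cleanly. In particular one must verify that the shot-noise and subsampling randomness are independent, so the law of total variance applies with $\bar V_B$ as the conditional mean, and that the per-basis bound of Lemma~\ref{lem:shots} contributes once for the real and once for the imaginary part—yielding the factor $2$—while the finite-population correction of Lemma~\ref{lem:subsample} enters only once, as a single complex variance shared by both bases. Everything else is the elementary observation that $M<2B$ keeps the $M^2/B^2$ ratio bounded.
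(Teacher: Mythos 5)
Your proposal is correct and follows essentially the same route as the paper's proof: the law of total variance conditioned on the subsample, with Lemma~\ref{lem:shots} bounding the conditional shot-noise term (doubled for the two measurement bases), Lemma~\ref{lem:subsample} bounding the variance of the conditional mean $\bar V_B$, and the padding relation $B\le M<2B$ giving $2M^2/(B^2S)\le 8/S$ before substituting $B,S=\Theta(\varepsilon^{-2})$. Your explicit real/imaginary split and the remark on independence of the shot batches merely make precise what the paper states more tersely; there is no substantive difference.
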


\begin{proof}[Proof of Corollary~\ref{cor:mse}]
By the law of total variance, conditioning on the subsample,
\begin{align}
\mathrm{MSE}(\hat z_X^{(j)}) &= \mathrm{Var}(\hat z_X^{(j)}) \nonumber \\
&= \mathbb{E}\!\left[\mathrm{Var}\!\left(\hat z_X^{(j)} \mid \text{subsamp.}\right)\right] \nonumber \\
&\quad + \mathrm{Var}\!\left(\mathbb{E}\!\left[\hat z_X^{(j)} \mid \text{subsamp.}\right]\right).
\end{align}
The first term is the shot-noise variance of the Hadamard test, bounded by Lemma~\ref{lem:shots} as $\frac{2M^2}{B^2 S}$ when combining real and imaginary parts; the second is the variance due to subsampling, bounded by Lemma~\ref{lem:subsample} as $\frac{c_j}{B}$. 
Adding both,
\[
\mathrm{MSE}(\hat z_X^{(j)}) \;\le\; \frac{c_j}{B} \;+\; \frac{2M^2}{B^2 S}.
\]
Since we pad $B$ to the next power of two, we have $B\le M<2B$, hence $\frac{M}{B}\le 2$ and the shot-noise contribution is uniformly bounded as $\frac{2M^2}{B^2 S}\le \frac{8}{S}$. 
Choosing $B=\Theta(\varepsilon^{-2})$ and $S=\Theta(\varepsilon^{-2})$ yields 
$\mathrm{MSE}(\hat z_X^{(j)})=\mathcal{O}(\varepsilon^2)$.
\end{proof}

\subsection{Quantum solver with one-hot preserving mixer}
We solve \eqref{eqn:qubo_ckm} using QAOA with a mixer that preserves the one-hot subspace per cluster.
For a cluster with $D$ candidates we use the XY ring mixer
\begin{equation}\label{eq:xy_ring}
\begin{aligned}
H_{\mathrm{mix}}^{(D)} &= \sum_{t=1}^{D} \bigl( X_t X_{t+1} + Y_t Y_{t+1} \bigr), \\
&\quad X_{D+1}\equiv X_1,\quad Y_{D+1}\equiv Y_1.
\end{aligned}
\end{equation}
which conserves Hamming weight and thus keeps exactly one excitation. 
Initial states are either the $D$-qubit $W$-state or the computational state $\lvert 100\ldots 0\rangle$, depending on $D$. 
We adopt a \emph{per-group} approach: each cluster defines its own $D_g$-variable QUBO solved independently, capping the decision-register size to $\max_g D_g$ instead of $\sum_g D_g$.

\begin{lemma}[One-hot invariance]\label{lem:onehot}
For $D\ge 2$, the XY ring mixer in \eqref{eq:xy_ring} preserves the Hamming weight-one subspace on $D$ qubits. (For $D=1$ the claim is trivial.)
\end{lemma}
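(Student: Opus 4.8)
The plan is to prove the stronger statement that $H_{\mathrm{mix}}^{(D)}$ commutes with the total Hamming-weight (number) operator $\hat N = \sum_{t=1}^{D} n_t$, where $n_t = \tfrac12(I - Z_t)$ counts the excitations (the $\lvert 1\rangle$ occupations) on qubit $t$. Since the weight-one subspace is precisely the eigenspace of $\hat N$ with eigenvalue $1$, any operator commuting with $\hat N$ leaves that eigenspace invariant, which yields the claim as an immediate corollary (and simultaneously explains why one excitation is conserved throughout the QAOA evolution).

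First I would rewrite each two-site term in ladder form. Introducing $\sigma_t^{\pm} = \tfrac12(X_t \pm \mathrm{i}Y_t)$, so that $\sigma^+$ raises $\lvert 0\rangle \mapsto \lvert 1\rangle$ and $\sigma^-$ lowers $\lvert 1\rangle \mapsto \lvert 0\rangle$, a direct expansion gives
\[
X_t X_{t+1} + Y_t Y_{t+1} = 2\bigl(\sigma_t^+ \sigma_{t+1}^- + \sigma_t^- \sigma_{t+1}^+\bigr),
\]
i.e.\ each summand is a \emph{hopping} operator that annihilates one excitation at a site and creates one at its neighbor, leaving the total occupation unchanged. Second, I would verify $[\sigma_t^+ \sigma_{t+1}^-,\,\hat N]=0$ (and its conjugate) by reducing to the single-site relations $[\sigma^+, n] = -\sigma^+$ and $[\sigma^-, n] = \sigma^-$: the $+1$ contributed by $\sigma_t^+$ and the $-1$ contributed by $\sigma_{t+1}^-$ cancel. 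Summing over $t=1,\dots,D$, including the wrap-around pair $(D,1)$ under the convention $X_{D+1}\equiv X_1$, $Y_{D+1}\equiv Y_1$, then gives $[H_{\mathrm{mix}}^{(D)},\hat N]=0$.

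As a more transparent alternative, I would simply compute the two-qubit matrix of $XX+YY$ in the ordered basis $\{\lvert00\rangle,\lvert01\rangle,\lvert10\rangle,\lvert11\rangle\}$ and observe that it equals $2\bigl(\lvert01\rangle\langle10\rangle + \lvert10\rangle\langle01\rangle\bigr)$: it swaps the two weight-one configurations and annihilates both $\lvert00\rangle$ and $\lvert11\rangle$. Because each summand acts as the identity on the remaining $D-2$ qubits, it cannot alter the global Hamming weight; applied to a weight-one basis state it merely hops the single excitation to an adjacent site. Restricting to the weight-one sector, the mixer is therefore block-diagonal and maps this subspace into itself.

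There is no genuine obstacle here, since this is a standard excitation-conservation fact; the only care needed is bookkeeping. Specifically, I would confirm the factor of $2$ and the sign in the ladder identity, and check the boundary conventions: the wrap-around bond $(D,1)$ is handled by the same two-site argument rather than being an exception, and for $D=2$ the two ring bonds coincide so that $H_{\mathrm{mix}}^{(2)} = 2(X_1X_2 + Y_1Y_2)$, which still preserves weight and does not break the argument.
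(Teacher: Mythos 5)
Your proof is correct and takes essentially the same route as the paper: both establish the stronger fact $[H_{\mathrm{mix}}^{(D)},\hat N]=0$ edge by edge, the paper by checking the Pauli commutators $[h_{t,t+1},Z_s]=0$ ($s\notin\{t,t+1\}$) and $[h_{t,t+1},Z_t+Z_{t+1}]=0$, and you by the equivalent hopping form $X_tX_{t+1}+Y_tY_{t+1}=2(\sigma_t^+\sigma_{t+1}^-+\sigma_t^-\sigma_{t+1}^+)$ (your explicit $4\times 4$ matrix alternative is the same computation done directly). One cosmetic slip: with your definition $\sigma^{\pm}=\tfrac12(X\pm\mathrm{i}Y)$ and occupation $n=\tfrac12(I-Z)$, one gets $\sigma^{+}=\lvert 0\rangle\langle 1\rvert$, which \emph{lowers} the Hamming weight (either swap the labels or define $\sigma^{\pm}=\tfrac12(X\mp\mathrm{i}Y)$); since the two hopping terms enter symmetrically, the identity and the cancellation in $[\sigma_t^{+}\sigma_{t+1}^{-},\hat N]=0$ are unaffected.
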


\begin{proof}[Proof of Lemma~\ref{lem:onehot}]
Let $N:=\sum_{t=1}^D \frac{1-Z_t}{2}$ be the excitation-number operator (Hamming weight). 
It suffices to show $[H_{\mathrm{mix}}^{(D)},N]=0$. Since $N=\frac{D}{2}-\frac12\sum_t Z_t$, this is equivalent to showing $[H_{\mathrm{mix}}^{(D)},\sum_t Z_t]=0$.
Consider a single edge term $h_{t,t+1}:=X_tX_{t+1}+Y_tY_{t+1}$. Using the Pauli commutation rules one checks
$[h_{t,t+1},Z_s]=0$ for $s\notin\{t,t+1\}$, and
$[h_{t,t+1},Z_t+Z_{t+1}]=0$.
Therefore $[h_{t,t+1},\sum_s Z_s]=0$ for each $t$, and by linearity
$[H_{\mathrm{mix}}^{(D)},\sum_s Z_s]=0$, hence $[H_{\mathrm{mix}}^{(D)},N]=0$.
Thus $H_{\mathrm{mix}}^{(D)}$ preserves every fixed-Hamming-weight subspace, in particular the weight-one subspace.
\end{proof}

\begin{lemma}[Connectivity]\label{lem:connect}
Restricted to the one-hot basis $\{\lvert 10\ldots0\rangle,\ldots,\lvert 0\ldots01\rangle\}$, 
the mixer graph of $H_{\mathrm{mix}}^{(D)}$ is a single $D$-cycle. Hence, within the one-hot subspace, QAOA can explore all one-hot states.
\end{lemma}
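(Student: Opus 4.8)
The plan is to compute the matrix of $H_{\mathrm{mix}}^{(D)}$ explicitly in the one-hot basis $\{\lvert e_r\rangle\}_{r=1}^D$, where $\lvert e_r\rangle$ denotes the weight-one state carrying the single excitation on qubit $r$, and to recognize the resulting hopping matrix as (twice) the adjacency matrix of the cycle graph $C_D$. Connectivity of $C_D$ then yields the reachability claim. By Lemma~\ref{lem:onehot} the mixer already preserves the weight-one subspace, so only its off-diagonal structure within that subspace needs to be determined.

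First I would record the action of a single edge operator $h_{t,t+1}=X_tX_{t+1}+Y_tY_{t+1}$ on two qubits. A direct computation on the computational basis gives $h\lvert 01\rangle = 2\lvert 10\rangle$, $h\lvert 10\rangle = 2\lvert 01\rangle$, and $h\lvert 00\rangle = h\lvert 11\rangle = 0$; that is, $XX+YY$ equals twice the excitation-hopping operator $\lvert 01\rangle\langle 10| + \lvert 10\rangle\langle 01|$ on the pair. Next I would lift this to the full register: acting on $\lvert e_r\rangle$, the term $h_{t,t+1}$ sees qubits $t,t+1$ in the state $\lvert 10\rangle$ if $r=t$, in $\lvert 01\rangle$ if $r=t+1$, and in $\lvert 00\rangle$ otherwise, the excitation being elsewhere. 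Hence $h_{t,t+1}\lvert e_t\rangle = 2\lvert e_{t+1}\rangle$, $h_{t,t+1}\lvert e_{t+1}\rangle = 2\lvert e_t\rangle$, and $h_{t,t+1}\lvert e_r\rangle = 0$ for $r\notin\{t,t+1\}$.

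Summing over $t=1,\ldots,D$ with the periodic identification $D+1\equiv 1$, the matrix of $H_{\mathrm{mix}}^{(D)}$ in the one-hot basis becomes $2A(C_D)$, where $A(C_D)$ is the adjacency matrix of the $D$-cycle: the wraparound term $t=D$ supplies exactly the edge joining vertex $D$ to vertex $1$, closing the path into a ring. This establishes that the mixer graph is precisely the single $D$-cycle. Finally I would deduce reachability from connectivity. Since $C_D$ is connected, for any pair $\lvert e_a\rangle,\lvert e_b\rangle$ there is an integer $\ell$ with $(A^\ell)_{ab}>0$; expanding $e^{-i\beta H_{\mathrm{mix}}^{(D)}} = \sum_{k\ge 0}\tfrac{(-2i\beta)^k}{k!}A^k$ shows that for all but isolated values of $\beta$ the amplitude $\langle e_a| e^{-i\beta H_{\mathrm{mix}}^{(D)}}|e_b\rangle$ is nonzero. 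Thus a single mixer layer already couples every pair of one-hot states, and the alternating QAOA evolution can place nonzero amplitude on every one-hot configuration.

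I expect the only genuine subtlety to be phrasing \emph{``can explore''} precisely. The clean route is the connected-graph/matrix-exponential argument above; equivalently, connectivity means the weight-one subspace admits no proper subspace invariant under the mixer, so the ansatz is free to navigate the entire one-hot sector. A minor edge case to note is $D=2$, where the two edge terms both couple qubits $1$ and $2$ and the ``cycle'' degenerates to a doubled edge; this graph is still connected, so the conclusion is unaffected.
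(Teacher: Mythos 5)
Your proof takes essentially the same route as the paper's: compute the hopping action of each edge term $h_{t,t+1}$ on the one-hot states ($h_{t,t+1}\lvert e_t\rangle = 2\lvert e_{t+1}\rangle$ and vice versa, zero otherwise), sum over the ring's edges to identify the restricted mixer as twice the adjacency matrix of $C_D$, and conclude from connectivity of the cycle. You go slightly beyond the paper in two sound ways---making the informal ``can explore'' claim rigorous via analyticity of $\beta\mapsto\langle e_a\rvert e^{-2\mathrm{i}\beta A}\lvert e_b\rangle$ (nonzero except at isolated $\beta$), and flagging the degenerate $D=2$ doubled-edge case---but these are refinements of the identical core argument, not a different approach.
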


\begin{proof}[Proof of Lemma~\ref{lem:connect}]
Let $|e_r\rangle$ denote the one-hot basis state with the unique excitation at position $r\in\{1,\dots,D\}$.
For a single edge $(t,t+1)$ one has
\begin{align}
(X_tX_{t+1}+Y_tY_{t+1})\,|e_t\rangle &= 2\,|e_{t+1}\rangle, \\
(X_tX_{t+1}+Y_tY_{t+1})\,|e_{t+1}\rangle &= 2\,|e_{t}\rangle
\end{align}

and the action is zero on $|e_r\rangle$ for $r\notin\{t,t+1\}$.
Summing over all edges on the ring, $H_{\mathrm{mix}}^{(D)}$ couples $|e_r\rangle$ only to its two neighbors $|e_{r-1}\rangle$ and $|e_{r+1}\rangle$ (indices mod $D$), with nonzero matrix elements. The induced interaction graph is thus the cycle $C_D$, which is connected.
\end{proof}

\begin{proposition}[Per-group decomposition]\label{prop:pergroup}
If inter-group couplings are dropped (as in our implementation), the objective decouples into
$k$ independent $D_g$-variable QUBOs—one per cluster—and the joint minimizer is obtained by
concatenating the per-group solutions.
\end{proposition}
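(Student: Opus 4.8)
The plan is to start from the matrix form \eqref{eqn:matrix_form} of $H_{\mathrm{CKM}}$, exhibit an explicit rewriting of the reduced objective (after inter-group couplings are dropped) as a sum of functions with pairwise disjoint variable supports, and then invoke the elementary fact that minimizing such a block-separable sum decouples into independent per-block minimizations whose concatenation is a global minimizer.

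First I would partition the $n=\sum_g D_g$ binary variables into the $k$ blocks $B_g=\{(g,r):r\in\{1,\ldots,D_g\}\}$, and write $y_g=(y_{g,1},\ldots,y_{g,D_g})$ for the restriction of $y$ to block $B_g$. The blocks $B_g$ are pairwise disjoint with union equal to the full index set. Then I would inspect the three terms of \eqref{eqn:matrix_form} under the ``drop inter-group couplings'' hypothesis. The penalty $\lambda\sum_g(1-\sum_r y_{g,r})^2$ is manifestly a sum whose $g$-th summand depends only on $y_g$; the linear term $\sum_{g,r} b_{g,r}y_{g,r}$ likewise splits as $\sum_g(\sum_r b_{g,r}y_{g,r})$ with each summand supported on $B_g$. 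The quadratic term $\sum_{(g,r),(g',r')}Q_{(g,r),(g',r')}\,y_{g,r}y_{g',r'}$ is the only one carrying cross-block coupling; deleting the entries with $g\neq g'$ leaves exactly the block-diagonal part $\sum_g\sum_{r,r'}Q_{(g,r),(g,r')}\,y_{g,r}y_{g,r'}$, whose $g$-th block is supported on $B_g$.

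Combining the three reductions gives $H(y)=\sum_{g=1}^k H_g(y_g)$ with $H_g(y_g)=\sum_{r,r'}Q_{(g,r),(g,r')}\,y_{g,r}y_{g,r'}+\sum_r b_{g,r}y_{g,r}+\lambda(1-\sum_r y_{g,r})^2$ depending only on $y_g$. The concluding step is the separability argument. Since the supports are disjoint and no constraint links distinct blocks, for every feasible $y$ we have $H(y)=\sum_g H_g(y_g)\ge\sum_g\min_{y_g}H_g(y_g)$, and if $y_g^\star$ minimizes each $H_g$ then $y^\star=(y_1^\star,\ldots,y_k^\star)$ attains this lower bound, so $\min_y H(y)=\sum_g\min_{y_g}H_g(y_g)$ and the concatenation is a joint minimizer.

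I do not anticipate a genuine obstacle, as the claim follows directly from support-disjointness once the rewriting is in place. The only point needing care is verifying that all three terms are truly block-local: the penalty by its defining sum over $g$, the linear term trivially, and the quadratic term after the deletion---here I would note that binary idempotency $y_{g,r}^2=y_{g,r}$ keeps each intra-group diagonal contribution within its own block and never reintroduces a cross-block term.
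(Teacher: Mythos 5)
Your proof is correct and follows essentially the same route as the paper's: both reduce the claim to the elementary fact that minimizing an additively separable objective over a Cartesian-product feasible set decouples into independent per-block minimizations. The only difference is that the paper takes the block decomposition $F=\sum_g f_g(y^{(g)})$ as given once inter-group couplings are dropped, whereas you verify it explicitly term by term from \eqref{eqn:matrix_form}; this added detail is sound and does not change the argument.
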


\begin{proof}[Proof of Proposition~\ref{prop:pergroup}]
Let the full (penalized) objective be $F(y^{(1)},\dots,y^{(k)})=\sum_{g=1}^k f_g\!\left(y^{(g)}\right)$, 
with one-hot constraints applied independently to each block $y^{(g)}\in\{0,1\}^{D_g}$. 
Since the feasible set factors as a Cartesian product and $F$ is additively separable,
\[
\arg\min_{(y^{(1)},\dots,y^{(k)})} \sum_{g=1}^k f_g\!\left(y^{(g)}\right)
= \prod_{g=1}^k \arg\min_{y^{(g)}} f_g\!\left(y^{(g)}\right).
\]
Thus any concatenation of per-group minimizers is a global minimizer, and conversely any global minimizer concatenates per-group minimizers.
\end{proof}

\noindent\emph{Notes.} 
(i) The one-hot mixer preserves the feasible subspace; we keep a scaled one-hot penalty in the QUBO to stabilize optimization under noise and mild model mismatch.
(ii) We apply a simple coefficient normalization in practice to improve numerical conditioning; this scaling does not change the minimizer but affects the effective optimization step sizes.

\subsection{q-Lloyd refinement}
After the initial per-group selection, we perform a few outer iterations to refine centroids:
(i) reassign points to the current centroids (nearest in the \emph{standardized} input space),
(ii) recompute per-cluster feature means $z_{X,g}$ via subsampled QFF restricted to the cluster,
(iii) rebuild each group QUBO with candidates jittered around the current centroid, and
(iv) re-solve with QAOA.
\paragraph{Elitist retention.}
At each outer iteration we include, for every group $g$, the centroid selected in the previous iteration among the new $D_g$ candidates. 
This guarantees that the previous per-group cost value is attainable in the new subproblem, enabling a direct comparison.
This q-Lloyd loop typically converges in a handful of steps under a tolerance on centroid movement.

\begin{proposition}[Descent of the conditional step under elitist retention]\label{prop:descent}
Assume \emph{elitist retention} and fix the assignments. 
For each group $g$, let $\mathcal{C}_g=\{\varphi(c_{g,r})\}_{r=1}^{D_g}$ be the candidate feature vectors and $z_{X,g}$ the current cluster feature mean.
If each per-group QUBO is solved to $\delta$-optimality (i.e., its returned value is within $\delta\ge 0$ of the group minimum), then the global surrogate
\(
\|z_X - z_\mu\|^2 = \sum_{g=1}^k \min_{r}\|\varphi(c_{g,r})-z_{X,g}\|^2
\)
weakly decreases up to an $\mathcal{O}(\delta)$ term when replacing the previous candidates by the new per-group selections.
\end{proposition}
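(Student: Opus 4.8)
The plan is to exploit the additive separability established in Proposition~\ref{prop:pergroup} and reduce the global descent claim to $k$ independent per-group inequalities, then chain feasibility (from elitist retention) with $\delta$-optimality of the solver. First I would fix the assignments, which freezes the cluster feature means $z_{X,g}$ throughout the conditional step, and write the surrogate being tracked as $\sum_{g=1}^{k} J_g$ with the per-group cost $J_g(c):=\|\varphi(c)-z_{X,g}\|^2$. Because each group's QUBO decouples from the others and its one-hot-preserving mixer keeps the variational evolution inside the feasible subspace (so the penalty term stays inactive), the value returned for group $g$ coincides, up to a candidate-\emph{independent} additive constant, with $J_g$ evaluated at the selected candidate.

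Next I would verify this reduction carefully, since it is the step where the main subtlety lives. Using $\varphi(c)=e^{\mathrm{i}Wc}$ with unit-modulus entries gives $\|\varphi(c)\|_2^2=m$ for every candidate, and $\|z_{X,g}\|_2^2$ is likewise fixed once the assignments are fixed; expanding $J_g(c)=m-2\,\mathrm{Re}\langle z_{X,g},\varphi(c)\rangle+\|z_{X,g}\|_2^2$ shows that minimizing $J_g$ over the candidate set is equivalent to minimizing the linear QUBO term, the two differing only by the constant $m+\|z_{X,g}\|_2^2$. Hence a solver returning a QUBO value within $\delta$ of the per-group minimum also returns a candidate whose surrogate cost $J_g$ lies within $\delta$ of $\min_{c}J_g(c)$ (the coefficient normalization by $S_{\text{coef}}$ only rescales $\delta$ by a fixed factor, which is absorbed into the $\mathcal{O}(\delta)$).

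The core argument is then two inequalities per group. Let $\mu_g^{\mathrm{old}}$ be the centroid selected in the previous iteration and $\mu_g^{\mathrm{new}}$ the $\delta$-optimal selection in the rebuilt subproblem. Elitist retention places $\mu_g^{\mathrm{old}}$ among the new candidates $\mathcal{C}_g^{\mathrm{new}}$, so the previous cost is attainable and bounds the new group minimum from above: $\min_{c\in\mathcal{C}_g^{\mathrm{new}}}J_g(c)\le J_g(\mu_g^{\mathrm{old}})$. Combining this with $\delta$-optimality gives $J_g(\mu_g^{\mathrm{new}})\le \min_{c\in\mathcal{C}_g^{\mathrm{new}}}J_g(c)+\delta\le J_g(\mu_g^{\mathrm{old}})+\delta$. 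Summing over the $k$ groups and invoking separability of the surrogate yields $\sum_{g}J_g(\mu_g^{\mathrm{new}})\le \sum_{g}J_g(\mu_g^{\mathrm{old}})+k\delta$, i.e.\ the global surrogate weakly decreases up to an $\mathcal{O}(\delta)$ term.

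I expect the main obstacle to be the bookkeeping of the second step rather than the chaining of inequalities: one must confirm that every quantity distinguishing the QUBO value from the surrogate cost—namely $\|\varphi(c)\|_2^2$, $\|z_{X,g}\|_2^2$, and the normalization factor $S_{\text{coef}}$—is candidate-independent within a fixed group and fixed assignment, so that it cancels in the old-versus-new comparison and does not contaminate the $\mathcal{O}(\delta)$ bound. Once that is secured, elitist retention supplies exactly the feasible reference point that makes the descent immediate.
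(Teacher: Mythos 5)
Your proposal is correct and follows essentially the same route as the paper's proof: elitist retention makes the old selection feasible, so $F_g^\star \le F_g(r_g^{\mathrm{old}})$, and chaining with $\delta$-optimality gives $F_g(\tilde r_g)\le F_g(r_g^{\mathrm{old}})+\delta$ per group, summing to the $k\delta=\mathcal{O}(\delta)$ bound. Your additional verification that the QUBO value and the surrogate cost $J_g$ differ only by candidate-independent constants (using $\|\varphi(c)\|_2^2=m$, the fixed $\|z_{X,g}\|_2^2$, and the $S_{\text{coef}}$ rescaling of $\delta$) is a worthwhile explicit bridge that the paper's proof leaves implicit when it applies $\delta$-optimality directly to $F_g$.
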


\begin{proof}[Proof of Proposition~\ref{prop:descent}]
With assignments fixed, the CKM surrogate decomposes as
\[
\|z_X - z_\mu\|^2 \;=\; \sum_{g=1}^k \underbrace{\min_{r\in[D_g]}\|\varphi(c_{g,r})-z_{X,g}\|^2}_{=:F_g^\star}.
\]
Let $\tilde r_g$ be the index returned by the (approximate) QAOA for group $g$, and $F_g(\tilde r_g)$ its attained value. By $\delta$-optimality, $F_g(\tilde r_g)\le F_g^\star+\delta$ for all $g$.
Under elitist retention, the previous selection $r_g^{\mathrm{old}}$ is included among the current candidates, so $F_g^\star\le F_g(r_g^{\mathrm{old}})$. 
Hence
\(
F_g(\tilde r_g) \;\le\; F_g^\star+\delta \;\le\; F_g(r_g^{\mathrm{old}})+\delta.
\)
Summing over $g$ gives the claimed global weak decrease up to $k\delta=\mathcal{O}(\delta)$.
\end{proof}

\begin{remark}[Reassignment step]
The nearest-centroid reassignment is performed in standardized input space and may not strictly 
decrease the feature-space objective; empirically, the outer loop converges in few iterations.
\end{remark}

\begin{corollary}[Termination under tolerance]\label{cor:stop}
With fixed QAOA depth and shot budget across iterations, the q-Lloyd loop stops in finitely many iterations once $\|C^{(t)}-C^{(t-1)}\|\le\tau$. 
A stronger (sufficient) condition for monotone improvement is to increase the subproblem accuracy over iterations (e.g., larger shot budgets or depth), but this is not required by our implementation.
\end{corollary}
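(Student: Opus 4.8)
The plan is to treat the two assertions of the corollary separately: finite termination under a fixed budget, and genuine monotone improvement under increasing accuracy. For the first, I would introduce the surrogate potential $\Phi^{(t)}=\sum_{g}\min_{r}\|\varphi(c^{(t)}_{g,r})-z^{(t)}_{X,g}\|^2$ attached to the configuration produced at outer step $t$, which is bounded below by $0$. The first substantive step is to invoke Proposition~\ref{prop:descent}: on the conditional (fixed-assignment) subproblem, elitist retention guarantees that the attained per-group value is at most the previous one up to the $\delta$-optimality slack, so the selection half-step cannot raise $\Phi$ by more than $k\delta$; with a fixed shot and depth budget $\delta$ is a fixed constant, and if the retained elite is evaluated classically and kept whenever the QAOA output is worse, the selection half-step is in fact non-increasing.

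The second step is to handle the only half-step that can raise the potential, namely the nearest-centroid reassignment. Rather than attempt a monotonicity argument there, I would appeal to discreteness. Each assignment is one of the finitely many maps $\{1,\dots,N\}\to\{1,\dots,k\}$, and, for each assignment, every centroid is selected from a finite candidate set $\mathcal{C}_g$ whose coordinates are represented at fixed precision; hence the joint state $(\text{assignment},C^{(t)})$ ranges over a finite set and the iterates $C^{(t)}$ live in a finite subset of $\mathbb{R}^{k\times d}$. In the regime where the conditional step is solved to its per-group optimum and ties are broken toward the retained elite, the update $C^{(t-1)}\mapsto C^{(t)}$ is a deterministic map on this finite set, so the iteration reaches a fixed configuration in finitely many steps; at such a fixed point $\|C^{(t)}-C^{(t-1)}\|=0\le\tau$ and the stopping test fires. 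For the stochastic case I would invoke the implementation's iteration cap as the fallback that guarantees halting regardless.

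The main obstacle I anticipate is precisely the reassignment step: because it is carried out in standardized input space rather than feature space, it need not decrease $\Phi^{(t)}$, so there is no single globally monotone Lyapunov function and termination cannot be argued from descent alone. My plan resolves this by letting termination rest on the tolerance test together with finiteness of the discretized state space, rather than on a monotone potential; this is the honest content of the first sentence and the reason the result is stated as termination under a tolerance rather than as convergence to a minimizer. The delicate point to state carefully is that finiteness by itself forces only eventual periodicity, so the clean fixed-point conclusion additionally uses that the conditional selection does not move off a choice that is optimal for its now-stable assignment.

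Finally, for the second sentence I would strengthen the descent half-step. If the per-group subproblems are solved with accuracy $\delta^{(t)}\to 0$ through growing shot budgets or depth, then the $O(k\delta^{(t)})$ slack in Proposition~\ref{prop:descent} vanishes, so the conditional step yields a genuinely non-increasing surrogate and, under fixed assignments, monotone improvement of $\Phi$. This is strictly stronger than the finite-termination statement and, as the corollary notes, is not required by the default fixed-budget implementation, which relies only on the tolerance test and finiteness established above.
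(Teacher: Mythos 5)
The paper's proof of this corollary is a one-liner: the loop explicitly tests $\|C^{(t)}-C^{(t-1)}\|\le\tau$ at each outer iteration and halts at the first $t$ for which the test succeeds (with the ``max steps'' cap in Algorithm~\ref{alg:bb_sche} as a backstop), and it stresses that no monotonicity of the objective is needed. The statement is conditional --- the loop stops \emph{once} the tolerance event occurs --- so the observation you relegate to your fallback (tolerance test plus iteration cap) is the entire content of the paper's argument. Your reading of the corollary as demanding a proof that the tolerance event must eventually occur goes beyond what the paper claims or proves; the honest core of your write-up --- that termination cannot rest on descent because the standardized-space reassignment may increase the feature-space surrogate, so it must rest on the explicit test --- coincides with the paper's Remark and with the closing sentence of its proof. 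Your handling of the second sentence (accuracy $\delta^{(t)}\to 0$ makes the conditional step genuinely non-increasing via Proposition~\ref{prop:descent}) is correct and matches the paper's intent.

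The extra machinery you add to establish the stronger, unconditional claim contains a genuine flaw. A deterministic self-map of a finite state space guarantees only eventual \emph{periodicity}, not a fixed point; without a globally monotone potential --- which you correctly note is unavailable here --- cycles of period two or more cannot be excluded. (Classical Lloyd escapes this precisely because its objective strictly decreases under suitable tie-breaking, which is the ingredient missing in this pipeline.) Your patch, that ``the conditional selection does not move off a choice that is optimal for its now-stable assignment,'' is circular: stability of the assignment is exactly what a fixed point would supply and exactly what a $2$-cycle denies, since in a cycle the assignments alternate. Moreover, the premise of a fixed finite state space traversed deterministically does not hold even conceptually: the q-Lloyd step re-jitters the candidate sets randomly around the current centroids at every outer iteration and the QAOA selection is stochastic, so appealing to fixed machine precision does not restore determinism. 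None of this endangers the corollary itself --- the tolerance test and the iteration cap already deliver the claimed finite termination, exactly as the paper argues --- but the fixed-point portion of your proposal should be deleted or explicitly downgraded to a heuristic remark rather than presented as part of the proof.
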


\begin{proof}[Proof of Corollary~\ref{cor:stop}]
The algorithm explicitly tests the stopping condition $\|C^{(t)}-C^{(t-1)}\|\le\tau$ at each iteration $t$. Therefore it halts at the first $t$ for which the condition holds. No monotonicity of the objective is required for this termination guarantee.
\end{proof}

\subsection{Hardware scaling and error analysis} \label{sec:scaling-error}
The grouped QUBO formulation in qc-kmeans reduces the peak qubit requirement from 
approximately $kD$ in the joint CKM approach to
\begin{subequations}\label{eq:qubit_peak}
\begin{align}
q_{\mathrm{peak}} &= \max\{D,\;\lceil\log_2 B\rceil_+ + 1\}, \label{eq:qubit_peak_main}\\
\lceil\log_2 B\rceil_+ &:= \max\{1,\lceil\log_2 B\rceil\}. \label{eq:qubit_peak_def}
\end{align}
\end{subequations}
where $D = \max_g D_g$ is the largest number of candidate centroids considered in any cluster 
and $B$ is the QFF subsample size.  
This per-group decomposition enables shallow-depth QAOA circuits (1–2 layers) compatible with NISQ hardware.

\begin{proposition}[Peak-qubit bound]\label{prop:qpeak}
Let $D = \max_g D_g$ and $n_i := \max\{1,\lceil\log_2 B\rceil\}$.  
Then $q_{\mathrm{peak}} = \max\{D, n_i+1\}$ for the per-group scheme,  
while a joint CKM QUBO requires at least $kD$ decision qubits.
\end{proposition}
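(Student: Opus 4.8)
The plan is to treat this as a resource-accounting statement rather than a genuine optimization bound: the pipeline executes two disjoint quantum primitives, and $q_{\mathrm{peak}}$ is simply the larger of their individual register widths. First I would isolate the two phases. The QFF estimation of each component of $z_X$ (Section~\ref{subsec:qff}) runs a Hadamard test on an index register of $n_i=\max\{1,\lceil\log_2 B\rceil\}$ qubits together with a single ancilla, so its width is exactly $n_i+1$; the diagonal oracle acts only on these registers and introduces no additional qubits. The centroid-selection phase runs a QAOA instance per cluster on the one-hot encoding of $\{y_{g,r}\}_{r=1}^{D_g}$, which by construction uses one qubit per candidate, hence $D_g$ qubits for group $g$.

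Next I would argue that the QAOA contribution to the peak is $D=\max_g D_g$ rather than $\sum_g D_g$. This is precisely where Proposition~\ref{prop:pergroup} enters: because inter-group couplings are dropped, the objective is additively separable and each group's $D_g$-variable QUBO is solved \emph{independently}. The groups therefore need not be coresident on the device; the decision register can be reused across groups, so the instantaneous width of this phase is $\max_g D_g = D$. Lemma~\ref{lem:onehot} guarantees that the XY-ring mixer in \eqref{eq:xy_ring} keeps each instance inside the weight-one subspace, so no penalty or ancillary qubits are needed beyond the $D_g$ one-hot qubits. Combining the two phases, and noting that they execute sequentially (never sharing live qubits), the peak across the whole run is the maximum of the two widths, giving $q_{\mathrm{peak}}=\max\{D,\;n_i+1\}$ as in \eqref{eq:qubit_peak}.

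Finally, for the joint-CKM comparison I would count the decision variables directly: stacking all groups into a single QUBO over $y\in\{0,1\}^{n}$ with $n=\sum_{g=1}^k D_g$ forces one qubit per variable under one-hot encoding, so at least $\sum_g D_g$ decision qubits must be allocated simultaneously; in the balanced setting $D_g=D$ this is exactly $kD$, establishing the linear-in-$k$ growth that the per-group scheme avoids. I do not expect a genuine mathematical obstacle here, since the argument is essentially bookkeeping; the one step that needs care is justifying the \emph{maximum} (rather than a sum) in $q_{\mathrm{peak}}$, which rests entirely on the sequential, per-group structure certified by Proposition~\ref{prop:pergroup} together with the one-hot mixer of Lemma~\ref{lem:onehot} introducing no auxiliary qubits. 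I would also flag the padding convention $n_i\ge 1$ so that the QFF width remains well defined even when $B=1$.
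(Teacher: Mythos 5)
Your proposal is correct and follows essentially the same bookkeeping as the paper's proof: it counts the QFF width $n_i+1$ (index register plus Hadamard-test ancilla), notes that the per-group QAOA decision register peaks at $D=\max_g D_g$ because the groups run sequentially with register reuse, and takes the maximum of the two since the routines never share live qubits --- your explicit appeals to Proposition~\ref{prop:pergroup} and Lemma~\ref{lem:onehot} just make precise what the paper asserts implicitly. One minor point in your favor: your handling of the joint-CKM count via the balanced case $D_g\equiv D$ is actually more careful than the paper's claim $\sum_g D_g\ge kD$, which with $D=\max_g D_g$ holds only when all $D_g$ are equal (in general $\sum_g D_g\le kD$).
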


\begin{proof}[Proof of Proposition~\ref{prop:qpeak}]
In the per-group scheme, QAOA over a single group requires exactly $D_g$ decision qubits; across all groups the largest such decision register is $D=\max_g D_g$. Separately, QFF estimation with subsample size $B$ uses an index register of $n_i=\max\{1,\lceil\log_2 B\rceil\}$ qubits plus one ancilla (for the Hadamard test), amounting to $n_i+1$ qubits. Since these two routines are executed in separate circuits (never concurrently), the peak number of qubits is the maximum of the two counts, i.e., $\max\{D,n_i+1\}$. 
By contrast, the joint CKM QUBO packs all groups into a single decision register with $\sum_g D_g\ge kD$ qubits, hence it requires at least $kD$ decision qubits.
\end{proof}

\begin{lemma}[Mixer depth and gate count]\label{lem:depth}
With $p$ QAOA layers and XY ring connectivity, the two-qubit gate count \emph{of the mixer} per group 
scales as $\mathcal{O}(pD)$; the hidden constant depends on the chosen decomposition of the XY mixer into native gates.
Separately, the \emph{cost} layer for a $D_g$-variable QUBO contributes $\mathcal{O}\!\big(p\,\mathrm{nnz}(Q_g)\big)$
two-qubit gates, which is $\mathcal{O}(pD_g^2)$ in the worst case and $\mathcal{O}(pD_g)$ if the quadratic form is ring-sparse.
\end{lemma}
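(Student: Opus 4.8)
The plan is to count two-qubit gates layer by layer, treating the mixer and cost exponentials of each QAOA layer separately. Writing layer $\ell$ as the product $e^{-\mathrm{i}\beta_\ell H_{\mathrm{mix}}^{(D_g)}}\,e^{-\mathrm{i}\gamma_\ell H_{C_g}}$ (the one-time state initialization contributes a $p$-independent additive cost), it suffices to bound the two-qubit gates in one mixer unitary and one cost unitary and then multiply by $p$.

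First I would handle the mixer. The key observation is that on a single ring edge the two terms $X_tX_{t+1}$ and $Y_tY_{t+1}$ commute: a direct Pauli computation gives $X_tX_{t+1}\,Y_tY_{t+1}=-Z_tZ_{t+1}=Y_tY_{t+1}\,X_tX_{t+1}$. Hence the single-edge exponential factors \emph{exactly} as $e^{-\mathrm{i}\beta X_tX_{t+1}}e^{-\mathrm{i}\beta Y_tY_{t+1}}$, which is a single two-qubit $XX{+}YY$ rotation realizable in a constant number of native two-qubit gates; that fixed decomposition count is precisely the hidden constant in the statement. The ring $C_{D_g}$ from \eqref{eq:xy_ring} has exactly $D_g$ edges, so one mixer layer uses $\Theta(D_g)$ two-qubit gates and $p$ layers give $\mathcal{O}(pD_g)$; taking the worst group $D=\max_g D_g$ yields the uniform $\mathcal{O}(pD)$ bound. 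I would add that the $D_g$ edges of a cycle admit an edge-coloring with at most three colors (two when $D_g$ is even), so the commuting edges within a color class run in parallel as in the sublayer schedule of Fig.~\ref{fig:qaoa-group-p}, giving mixer depth $\mathcal{O}(p)$.

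Next I would handle the cost layer. Restricted to a group, $H_{C_g}$ from \eqref{eqn:matrix_form} is diagonal in the computational basis, so under the standard substitution $y_r\mapsto(1-Z_r)/2$ its exponential $e^{-\mathrm{i}\gamma H_{C_g}}$ decomposes into one single-qubit $Z$-rotation per linear or diagonal term and one two-qubit $ZZ$-rotation $e^{-\mathrm{i}\gamma' Z_rZ_{r'}}$ per off-diagonal nonzero entry $Q_{(g,r),(g,r')}$. Only the latter are two-qubit gates, so one cost layer contributes $\Theta(\mathrm{nnz}(Q_g))$ of them and $p$ layers give $\mathcal{O}(p\,\mathrm{nnz}(Q_g))$. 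Bounding $\mathrm{nnz}(Q_g)$ by its maximum $\mathcal{O}(D_g^2)$ for a dense Gram-type coupling recovers the worst case $\mathcal{O}(pD_g^2)$, whereas a ring-sparse $Q_g$ has $\mathcal{O}(D_g)$ nonzeros and yields $\mathcal{O}(pD_g)$.

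This is essentially a bookkeeping argument, so the only points requiring care—the ``hard part,'' such as it is—are (i) justifying the exact factorization of the single-edge exponential so that each edge is genuinely one two-qubit rotation rather than an approximate Trotter split that could inflate the count, and (ii) cleanly separating the single-qubit $Z$-rotations (which do not contribute to the two-qubit total) from the two-qubit $ZZ$-rotations when expanding the diagonal cost exponential. Both are routine once the commutation identity above is in hand, and the unspecified constants are simply the fixed native-gate costs of a single $XX{+}YY$ rotation and a single $ZZ$ rotation on the target hardware.
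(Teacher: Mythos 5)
Your proof is correct and follows essentially the same counting argument as the paper: one two-qubit $XX{+}YY$ rotation per ring edge ($D$ edges, constant native-gate cost, times $p$ layers) for the mixer, and one $ZZ$ rotation per nonzero off-diagonal coupling of $Q_g$ for the diagonal cost unitary, giving $\mathcal{O}(p\,\mathrm{nnz}(Q_g))$ with the dense and ring-sparse specializations. Your additions---the exact within-edge factorization via $[X_tX_{t+1},Y_tY_{t+1}]=0$ and the edge-coloring remark yielding $\mathcal{O}(p)$ mixer depth---are sound refinements of details the paper leaves implicit (its Fig.~\ref{fig:qaoa-group-p} sublayer schedule is exactly your color-class parallelization), not a different route.
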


\begin{proof}[Proof of Lemma~\ref{lem:depth}]
A single XY ring layer comprises $D$ two-qubit interactions (one per edge of the cycle). Each such interaction compiles
into a constant number of native two-qubit gates, hence a mixer layer is $\mathcal{O}(D)$ and $p$ layers give $\mathcal{O}(pD)$.
For the cost unitary $\exp(-i\gamma H_{\mathrm{cost}})$ with quadratic Ising form
$H_{\mathrm{cost}}=\sum_i a_i Z_i + \sum_{i<j} b_{ij} Z_i Z_j$, the number of two-qubit entangling gates scales with the number
of nonzero couplings $\mathrm{nnz}(Q_g)=|\{(i,j): b_{ij}\neq 0\}|$. Therefore each QAOA layer contributes
$\mathcal{O}(\mathrm{nnz}(Q_g))$ two-qubit gates for the cost, i.e., $\mathcal{O}\!\big(p\,\mathrm{nnz}(Q_g)\big)$ in $p$ layers.
In the dense worst case $\mathrm{nnz}(Q_g)=\Theta(D_g^2)$; for ring-sparse costs, $\mathrm{nnz}(Q_g)=\Theta(D_g)$.
\end{proof}

\noindent
\textbf{Relaxation error of the per-group QUBO:}
Let the full CKM QUBO be $F(x)=x^\top Q x + c^\top x$ over the feasible set $\Omega$ of concatenated one-hot vectors $x=[x^{(1)};\ldots;x^{(k)}]$, where $x^{(g)}\in\{0,1\}^{D_g}$ and $\mathbf{1}^\top x^{(g)}=1$.
Partition $Q$ into diagonal blocks $Q_g$ (intra-group terms) and off-diagonal blocks $R_{gh}$ ($g\neq h$), so that
\[
Q=\mathrm{blkdiag}(Q_1,\ldots,Q_k)+\sum_{g<h}\big(E_{gh}(R_{gh})+E_{hg}(R_{gh}^\top)\big),
\]
with $E_{gh}(\cdot)$ embedding a block at $(g,h)$.
The grouped surrogate drops the inter-group couplings:
\[
\tilde F(x)=\sum_{g=1}^k (x^{(g)})^\top Q_g\, x^{(g)} + \sum_{g=1}^k c_g^\top x^{(g)}.
\]
Define the relaxation term $E(x):=F(x)-\tilde F(x)=\sum_{g<h}\!\big((x^{(g)})^\top R_{gh} x^{(h)}+(x^{(h)})^\top R_{hg} x^{(g)}\big)$.

\begin{lemma}[Pointwise bound]\label{lem:pointwise_relax}
For any feasible $x\in\Omega$,
\[
|E(x)| \;\le\; 2\sum_{g<h}\|R_{gh}\|_{\infty},
\]
where $\|\cdot\|_{\infty}$ is the entrywise max-norm.
\end{lemma}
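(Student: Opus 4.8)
The plan is to bound $|E(x)|$ by controlling each bilinear term $(x^{(g)})^\top R_{gh} x^{(h)}$ using the one-hot structure of the feasible vectors. The key observation is that for a feasible $x\in\Omega$, each block $x^{(g)}\in\{0,1\}^{D_g}$ has exactly one nonzero entry equal to $1$. Therefore the bilinear form $(x^{(g)})^\top R_{gh} x^{(h)}$ simply selects a single entry of the matrix $R_{gh}$: if the active index in block $g$ is $r$ and in block $h$ is $s$, then $(x^{(g)})^\top R_{gh} x^{(h)} = (R_{gh})_{r,s}$. This immediately gives $|(x^{(g)})^\top R_{gh} x^{(h)}| \le \|R_{gh}\|_{\infty}$, the entrywise max-norm.

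First I would recall the definition $E(x)=\sum_{g<h}\big((x^{(g)})^\top R_{gh} x^{(h)}+(x^{(h)})^\top R_{hg} x^{(g)}\big)$ and apply the triangle inequality to split it into a sum over pairs $g<h$. Next, for each pair I would bound the two terms separately: by the one-hot selection argument above, $|(x^{(g)})^\top R_{gh} x^{(h)}| \le \|R_{gh}\|_{\infty}$, and analogously $|(x^{(h)})^\top R_{hg} x^{(g)}| \le \|R_{hg}\|_{\infty}$. Since $Q$ is real symmetric, the off-diagonal block $R_{hg}$ equals $R_{gh}^\top$, so $\|R_{hg}\|_{\infty}=\|R_{gh}^\top\|_{\infty}=\|R_{gh}\|_{\infty}$, and the two contributions coincide. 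Summing over all pairs then yields $|E(x)| \le \sum_{g<h}\big(\|R_{gh}\|_{\infty}+\|R_{gh}\|_{\infty}\big)=2\sum_{g<h}\|R_{gh}\|_{\infty}$, as claimed.

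This argument is essentially elementary and I do not anticipate a serious obstacle; the only point requiring mild care is making precise that the one-hot constraint reduces the bilinear form to a single matrix entry rather than a general convex combination. Since the entries of $x^{(g)}$ and $x^{(h)}$ are exactly $0$ or $1$ with a single $1$ in each (not fractional), the double sum $\sum_{r,s}(x^{(g)})_r (R_{gh})_{r,s}(x^{(h)})_s$ collapses to the single surviving term $(R_{gh})_{r^\star,s^\star}$, whose absolute value is at most the maximum entry magnitude $\|R_{gh}\|_{\infty}$. I would state this selection step explicitly, since it is what makes the entrywise max-norm (rather than an operator or Frobenius norm) the natural quantity in the bound, and it is precisely the feasibility (one-hot) assumption $x\in\Omega$ that licenses it.
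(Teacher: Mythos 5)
Your proposal is correct and follows essentially the same argument as the paper's proof: each one-hot block reduces the bilinear form $(x^{(g)})^\top R_{gh}\,x^{(h)}$ to a single entry of $R_{gh}$ bounded by $\|R_{gh}\|_{\infty}$, the transposed block $R_{hg}=R_{gh}^\top$ contributes the same bound, and summing over pairs $g<h$ yields the factor of $2$. Your explicit remark that the one-hot constraint makes the selection exact (not a convex combination) is a welcome clarification but does not change the route.
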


\begin{proof}
Each $x^{(g)}$ is one-hot, so $(x^{(g)})^\top R_{gh} x^{(h)}$ selects a single entry of $R_{gh}$; its magnitude is at most $\|R_{gh}\|_{\infty}$. The term with $R_{hg}=R_{gh}^\top$ contributes the same bound. Summing over $g<h$ gives the claim.
\end{proof}

\begin{theorem}[Suboptimality of the grouped relaxation]\label{thm:gap_relax}
Let $x^\star\in\arg\min_{x\in\Omega}F(x)$ and $\tilde x\in\arg\min_{x\in\Omega}\tilde F(x)$. Then
\(
 0 \;\le\; F(\tilde x)-F(x^\star) \;\le\; 2\max_{x\in\Omega}|E(x)| \;\le\; 4\sum_{g<h}\|R_{gh}\|_{\infty}.
\)
\end{theorem}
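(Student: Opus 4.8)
The plan is a textbook perturbation argument: write $F=\tilde F+E$ and sandwich the optimality gap of $F$ in terms of the dropped coupling term $E$, exploiting that both $x^\star$ and $\tilde x$ lie in the common feasible set $\Omega$ (dropping inter-group couplings alters the objective but not the one-hot feasible set). The left inequality is immediate: since $x^\star$ minimizes $F$ over $\Omega$ and $\tilde x\in\Omega$, we have $F(x^\star)\le F(\tilde x)$, hence $F(\tilde x)-F(x^\star)\ge 0$.

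For the central inequality I would substitute $F=\tilde F+E$ into the gap and regroup as
\[
F(\tilde x)-F(x^\star)=\big[\tilde F(\tilde x)-\tilde F(x^\star)\big]+\big[E(\tilde x)-E(x^\star)\big].
\]
The crucial step is that $\tilde x$ minimizes $\tilde F$ over $\Omega$ while $x^\star\in\Omega$, so the first bracket is nonpositive and may be dropped; the second is bounded by the triangle inequality as $|E(\tilde x)|+|E(x^\star)|\le 2\max_{x\in\Omega}|E(x)|$, yielding $F(\tilde x)-F(x^\star)\le 2\max_{x\in\Omega}|E(x)|$. The final inequality then follows by invoking Lemma~\ref{lem:pointwise_relax}, which bounds $|E(x)|\le 2\sum_{g<h}\|R_{gh}\|_\infty$ uniformly over $\Omega$, so that $2\max_{x\in\Omega}|E(x)|\le 4\sum_{g<h}\|R_{gh}\|_\infty$.

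There is no substantial obstacle; the only point requiring care is the cross-comparison of optimality—using $x^\star$'s optimality solely for $F$ and $\tilde x$'s optimality solely for $\tilde F$ while evaluating both objectives at both feasible points. Because the two optimization problems share the feasible set $\Omega$, each such evaluation is legitimate and the sandwiching closes without additional assumptions.
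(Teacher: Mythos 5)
Your proof is correct and follows essentially the same route as the paper's: decompose $F=\tilde F+E$, invoke the optimality of $\tilde x$ for $\tilde F$ to discard the $\tilde F$-difference, bound the remaining terms by $2\max_{x\in\Omega}|E(x)|$, and finish with Lemma~\ref{lem:pointwise_relax}. If anything, your explicit regrouping $F(\tilde x)-F(x^\star)=[\tilde F(\tilde x)-\tilde F(x^\star)]+[E(\tilde x)-E(x^\star)]$ is slightly cleaner than the paper's intermediate step (which sloppily writes $|E(\tilde x)|-|E(x^\star)|$ where $|E(\tilde x)|-E(x^\star)$ is what its argument yields), and you also verify the left inequality $0\le F(\tilde x)-F(x^\star)$, which the paper leaves implicit.
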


\begin{proof}
Let $F(\tilde x)=\tilde F(\tilde x)+E(\tilde x)\le \tilde F(x^\star)+|E(\tilde x)|$ by optimality of $\tilde x$ for $\tilde F$. Also $F(x^\star)=\tilde F(x^\star)+E(x^\star)$. Subtracting yields $F(\tilde x)-F(x^\star)\le |E(\tilde x)|-|E(x^\star)|\le |E(\tilde x)|+|E(x^\star)|\le 2\max_{x\in\Omega}|E(x)|$. Apply Lemma~\ref{lem:pointwise_relax}.
\end{proof}

\noindent
\textit{Remark.} If $R_{gh}=0$ for all $g\neq h$ (block-diagonal $Q$), then $E(x)\equiv 0$ and the grouped and joint formulations coincide.
More generally, if for each $g$ one has $\sum_{h\neq g}\|R_{gh}\|_{\infty}\le \varepsilon$, then
\(
\sum_{g<h}\|R_{gh}\|_{\infty}\;\le\;\tfrac12\sum_{g\neq h}\|R_{gh}\|_{\infty}\;\le\;\tfrac12\sum_g \varepsilon \;=\; \tfrac{k}{2}\,\varepsilon,
\)
and Theorem~\ref{thm:gap_relax} gives
\(
F(\tilde x)-F(x^\star)\;\le\;4\sum_{g<h}\|R_{gh}\|_{\infty}\;\le\;2k\,\varepsilon,
\)
hence the relaxation gap is $\mathcal{O}(k\,\varepsilon)$.

\medskip

These hardware savings come with approximation effects. Random Fourier features approximate shift-invariant kernels with error $\mathcal{O}(m^{-1/2})$, where $m$ is the number of features. The CKM surrogate $\|z_X - z_\mu\|^2$ measures feature-space alignment; stochastic error from Quantum Fourier Feature estimation with subsample size $B$ contributes $\mathcal{O}(B^{-1/2})$ standard error plus quantum shot noise. \emph{In addition}, the per-group relaxation introduces a \emph{deterministic} approximation term $E(x)$ bounded by Theorem~\ref{thm:gap_relax}; this term vanishes when inter-group couplings are negligible or block-diagonal.

\subsection{Quantum Compressive  $k$-Means Clustering Scheme} \label{sec:method:overview}
We adopt the Quantum Compressive  $k$-Means (qc-kmeans) approach, which combines compressed feature representation with quantum optimization for centroid selection and refinement. The method applies to unsupervised clustering under limited qubit resources, using sub-sampled quantum feature estimation and independent per-cluster QUBO solves. Algorithm~\ref{alg:bb_sche} summarizes the procedure.

\begin{algorithm}[tbh]
\caption{Quantum Compressive $k$-Means (QFF + grouped QAOA)}
\label{alg:bb_sche}
\begin{algorithmic}[1]
  \Require Data $X\in\mathbb{R}^{N\times d}$, clusters $k$, \#freq. $m$, cand./cluster $D$, QAOA depth $p$, penalty $\lambda$, subsample $B$
  \Ensure Centroids $\hat{C}$, assignments $\hat{y}$
  \State $X_s \gets \mathrm{StandardScaler}(X)$
  \State Draw $W\in\mathbb{R}^{m\times d}$ with rows $\sim \mathcal{N}(0,I_d)$
  \State \textbf{Estimate $z_X$ (QFF only):} $z_X \gets \mathrm{Lazy\text{-}QFF\_Estimate}(X_s,W,B)$
  \State Seeds $S \gets k$-means$(X_s)$
  \State Generate $D$ jittered candidates $C_g[r]$ per cluster; features $V_g[r] \gets e^{iW C_g[r]}$
  \For{$g=1,\dots,k$}
    \State Solve grouped QUBO $\min_{y_g} \| \sum_r y_{g,r} V_g[r] - z_{X,g}\|^2 + \lambda\,\text{one-hot}$ via QAOA
    \State $\hat{C}_g \gets C_g[\arg\max_r y_{g,r}]$
  \EndFor
  \If{refinement $>0$} \Comment{grouped q-Lloyd}
    \Repeat
      \State Reassign points; resample local candidates around $\hat{C}_g$; recompute $V_g[r]$
      \State Re-solve grouped QUBOs via QAOA; update $\hat{C}_g$
    \Until{convergence or max steps}
  \EndIf
  \State \Return $\hat{C},\hat{y}$
\end{algorithmic}
\end{algorithm}


\section{Computational Experiments} \label{sec:experiments}

We implemented two variants of qc-kmeans in Qiskit~1.3.2 using Python~3.10.12. Each quantum circuit uses a single QAOA layer ($p=1$) and a total of 10{,}000 shots for observable estimation, following the hyperparameter guidelines in~\cite{hao2024end}. 
Circuits were simulated on Qiskit’s \texttt{AerSimulator} with IBM-Q noise emulation. All simulations were executed on an Ubuntu Linux system (kernel~6.8.0-51-generic) with an Intel\textregistered{} Xeon\textregistered{} Gold~6230R CPU @~2.10\,GHz (104 logical cores) and 187\,GiB of RAM.
Our algorithm targets NISQ devices through three design choices: (i) per-group QUBOs to limit the decision register to $D$ qubits, (ii) an XY mixer that preserves the one-hot subspace at low depth, and (iii) sub-sampled QFF to bound the feature-compression register to $n_i+1$ qubits. These constraints define the peak-qubit budget in~\eqref{eq:qubit_peak} and enable an end-to-end quantum pipeline using subsampled QFF($B$) and quantum per-group assignment via QAOA.
\textcolor{black}{
All experiments with \texttt{qc-kmeans} use $m=4kd$, a per–cluster candidate set size $D=4$, QAOA \texttt{reps}$=1$, a total shot budget of $10^4$, and a QFF subsample size $B=2^8=256$.}
\color{black}
We evaluate qc-kmeans and its noise-simulated variant on 15 datasets. Six of these are widely used synthetic benchmarks \texttt{circles}, \texttt{moons}, \texttt{spiral}, \texttt{moons\_2}, \texttt{an blobs}, and \texttt{vd blobs} with sample sizes ranging from \(n\) = 300 to 2{,}100. In addition to \texttt{Hemicellulose} \cite{wang_predicting_2022} and \texttt{PR2392} \cite{padberg_branch-and-cut_1991}, we consider seven real-world datasets from the UCI Machine Learning Repository \cite{Dua:2019}, with \(n\) ranging from nearly 2{,}000 samples up to 434{,}876 samples. The complete code can be found at \url{https://anonymous.4open.science/r/IEEE_qckmeans-5B8C/}.

\subsection{Benchmark against Quantum Baselines} \label{ssec:baselines} 
\begin{table*}[htbp]
\caption{Performance of qc-kmeans on low\text{-}dimensional synthetic datasets ($d=2$) with fixed\text{-}depth QAOA ($p=1$). The sketch $z_X$ is estimated via Lazy\text{-}QFF ($B=256$) with $D=6$ candidates per cluster. The peak\text{-}qubit bound is $q_{\text{peak}}=\max\{D,\lceil\log_2 B\rceil+1\}$. We allocate $10{,}000$ shots to QAOA; QFF uses $1{,}024$ shots per frequency.}
\label{tab:baselines}
\centering
\begin{sc}
\resizebox{\textwidth}{!}{
\begin{tabular}{ll|ccc|ccc|ccc}
\toprule
 \multirow{2}{*}{Dataset}& \multirow{2}{*}{Method}&\multicolumn{3}{c|}{$k$=3} & \multicolumn{3}{c|}{$k$=5} & \multicolumn{3}{c}{$k$=10} \\
 & &SSE & Qubits & Time (s) & SSE & Qubits & Time (s) & SSE & Qubits & Time (s) \\
\midrule
 Circles& q-means& 75.33 &   11&    -&     53.33 &   11&    -&     30.81 &   11&    -\\
 $n=300$
& q-k-Means&71.57&   5&   764.25
&    41.517
&   5&   764.25
&    33.81&   5&   2{,}539.45\\
 $d=2$& QUBO&136.33&   3&   5{,}004.06&    69.63
&   5&   12{,}973.30&   \multicolumn{3}{c}{No solution found}\\
 & qc-kmeans& \textbf{71.47}&   9&    \textbf{27.74}&     \textbf{39.93}&   9&    \textbf{52.43}&     \textbf{20.06}&   9&    \textbf{233.69}\\
\midrule
 Spiral& q-means& 2{,}104.90&   11&    -&     1{,}292.37&   11&    -&     647.13 &   11&    -\\
 $n=300$
& q-k-Means&2{,}338.25&   5&   733.60&    1{,}360.19&   5&   1{,}202.75&    891.49&   5&   2{,}408.5\\
 $d=2$& QUBO&38{,}014.65&   3&   4{,}898.70&    5{,}785.60&   5&   9{,}466.68&    \multicolumn{3}{c}{No solution found}\\
 & qc-kmeans& \textbf{1{,}880.57}&   9&    \textbf{44.31}&     \textbf{1{,}059.09}&   9&    \textbf{38.35}&     \textbf{448.38}&   9&    \textbf{189.21}\\
\midrule
 Moons\_2& q-means& 250.80 &   11&    -&     219.43 &   11&    -&     62.57 &   11&    -\\
 $n=400$
& q-k-Means&256.28&   5&   1{,}011.65&    138.19
&   5&   1{,}646.9&    51.30
&   5&   3{,}480.55\\
 $d=2$& QUBO&1{,}206.42&   3&   7{,}323.78&    \multicolumn{3}{c|}{No solution found} &    \multicolumn{3}{c}{No solution found}\\
 & qc-kmeans& \textbf{247.15}&   9&    \textbf{61.05}&     \textbf{121.48}&   9&    \textbf{38.52}&     \textbf{42.92}&   9&    \textbf{154.63}\\
\midrule
 Blobs-600& q-means& 368.80 &   12&    -&     309.35 &   12&    -&     206.73 &   12&    -\\
 $n=600$& q-k-Means&365.58
&   5&   1{,}458.3&    228.17
&   5&   2{,}454.75&    145.92
&   5&   4{,}899.65\\
 $d=2$& QUBO&\multicolumn{3}{c|}{No solution found}&   \multicolumn{3}{c|}{No solution found}&    \multicolumn{3}{c}{No solution found}\\
 & qc-kmeans& \textbf{356.07}&   9&    \textbf{51.60}&     \textbf{199.54}&   9&    \textbf{85.59}&     \textbf{120.00}&   9&    \textbf{82.67}\\ 
\midrule
 Blobs-1200& q-means&3.03E+04
&   13&   -&    2.91E+04&   13&   -&    1.43E+04&   13&   -\\
 $n=1{,}200$& q-k-Means&1.57E+05
&   5&   3{,}055.35&    1.57E+05
&   5&   5{,}050.05&    1.53E+05
&   5&   10{,}003.6\\
 $d=2$& QUBO& \multicolumn{3}{c|}{No solution found}&    \multicolumn{3}{c|}{No solution found} &    \multicolumn{3}{c}{No solution found}\\
 & qc-kmeans& \textbf{1.75E+04}&   9&     \textbf{39.44}&     \textbf{1.29E+04}&   9&     \textbf{101.35}&     \textbf{7.83E+03}&   9&     \textbf{320.99}\\ 
\midrule
 Blobs-2100& q-means&5.22E+04&   14&   -&    3.52E+04&   14&   -&    3.90E+04
&   14&   -\\
 $n=2{,}100$& q-k-Means&2.62E+05
&   5&   5{,}062.50&    2.62E+05
&   5&   8{,}394.05&    \multicolumn{3}{c}{No solution found} \\
 $d=2$& QUBO&  \multicolumn{3}{c|}{No solution found}&    \multicolumn{3}{c|}{No solution found} &    \multicolumn{3}{c}{No solution found}\\
 & qc-kmeans& \textbf{2.09E+04}&   9&     \textbf{92.01}&     \textbf{1.58E+04}&   9&     \textbf{82.32}&     \textbf{9.01E+03}&   9&     \textbf{264.77}\\ 
\bottomrule
\end{tabular}
}
\end{sc}
\end{table*}

\begin{figure*}[htbp]
    \centering
    \includegraphics[width=\linewidth]{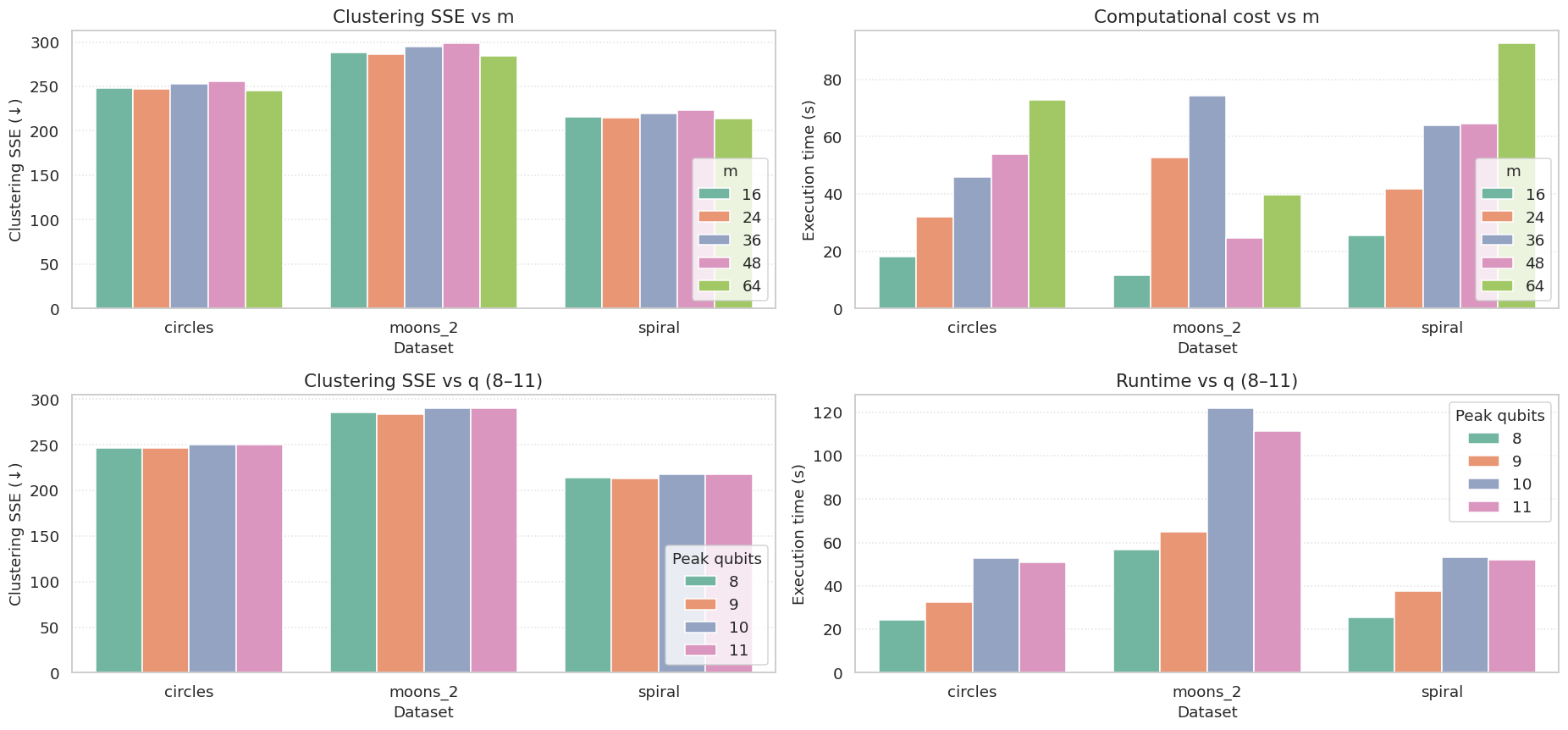}
    \caption{Comparison of clustering SSE and computational cost across datasets for two parameterizations: \textbf{(a)} clustering SSE vs.\ frequency parameter $m$, \textbf{(b)} execution time vs.\ $m$, \textbf{(c)} clustering SSE vs.\ peak number of qubits $q_{\mathrm{peak}}$, \textbf{(d)} execution time vs.\ $q_{\mathrm{peak}}$. For subfigures (c) and (d), only $q_{\mathrm{peak}} \in \{8,9,10,11\}$ are shown.}
    \label{fig:comparison_sse_runtime_m_qpeak}
\end{figure*}


\begin{table*}[htbp]
\caption{Performance of qc-kmeans on real-world datasets with fixed\text{-}depth QAOA ($p=1$).
The sketch $z_X$ is estimated via Lazy\text{-}QFF with subsample $B=256$, using $D=6$ candidates per cluster.
The peak\text{-}qubit bound is $q_{\text{peak}}=\max\{D,\lceil\log_2 B\rceil+1\}$.
We allocate $10{,}000$ shots to QAOA; QFF uses $1{,}024$ shots per frequency.}

\label{tab:resuls_syn}
\centering
\begin{sc}
\resizebox{\textwidth}{!}{
\begin{tabular}{llll|ccc|ccc|ccc}
\toprule
 \multirow{2}{*}{Dataset}& \multirow{2}{*}{$n$}& \multirow{2}{*}{$d$}&\multirow{2}{*}{Method}&\multicolumn{3}{c|}{$k$=3} & \multicolumn{3}{c|}{$k$=5} & \multicolumn{3}{c}{$k$=10} \\
 &  & & & SSE & $m$& Time (s) & SSE & $m$& Time (s) & SSE & $m$& Time (s) \\
\midrule
Hemi & 1,955 & 7 &  $k$-Means& 9.754E+06& - & - & 5.253E+06& - & - & 2.580E+06& - & - \\
 & & & Classical CKM& 1.698E+07& 84& - & 8.842E+06& 140& - & 3.932E+06& 280& - \\
 & & & qc-kmeans & 1.820E+07
& 84 & 146.39
& 1.600E+07
& 140 & 245.32
& 1.100E+07& 280 & 487.97 \\
 & & & qc-kmeans\_noise & 1.819E+07& 84 & 4,037.28& 1.603E+07& 140 & 5,992.98& 1.095E+07& 280 & 11{,}090.44\\
\midrule
pr2392 & 2,392 & 2 &  $k$-Means& 2.120E+10& - & - & 1.160E+10& - & - & 5.320E+09& - & - \\
 & & & Classical CKM& 2.120E+10& 24& - & 1.265E+10& 40& - & 5.494E+09& 80& - \\
 & & & qc-kmeans & 2.450E+10
& 24 & 50.01
& 1.270E+10
& 40 & 83.63
& 5.680E+09& 80 & 164.59 \\
 & & & qc-kmeans\_noise & 2.446E+10& 24& 1,214.71& 1.266E+10& 40& 1{,}753.70& 5.680E+09
& 80& 3{,}313.80\\
\midrule
AC & 7,195 & 22 &  $k$-Means& 1{,}984.73& - & - & 1{,}513.67& - & - & 1{,}073.29& - & - \\
 & & & Classical CKM& 2{,}465.76& 264& - & 2{,}063.49& 440& - & 1{,}562.16& 880& - \\
 & & & qc-kmeans & 2{,}120.13& 264 & 413.33 & 1{,}609.12& 440 & 735.59 & 1{,}184.29& 880 & 1{,}459.64\\
 & & & qc-kmeans\_noise & 2{,}120.13& 264& 11,265.83& 1609.12& 440& 17{,}091.23& 1{,}184.29& 880& 33{,}356.07\\
\midrule
Rds\_CNT & 10,000 & 4 &  $k$-Means& 1.491E+07& - & - & 5.372E+06& - & - & 1.610E+06& - & - \\
 & & & Classical CKM& 1.492E+07& 48& - & 5.374E+06& 80& - & 1.737E+06& 160& - \\
 & & & qc-kmeans & 4.180E+07
& 48 & 83.83 & 3.660E+07
& 80 & 149.09 & 2.240E+07
& 160 & 296.67 \\
 & & & qc-kmeans\_noise & 4.179E+07& 48& 2,145.78& 3.661E+07& 80& 3{,}263.80& 2.238E+07& 160& 6{,}326.28\\
\midrule
HTRU2 & 17,898 & 8 &  $k$-Means& 7.932E+07& - & - & 4.180E+07& - & - & 1.813E+07& - & - \\
 & & & Classical CKM& 7.932E+07& 96& - & 4.179E+07& 160& - & 2.459E+07& 320& - \\
 & & & qc-kmeans & 1.850E+08
& 96 & 168.31 & 8.950E+07
& 160& 279.99 & 4.330E+07& 320 & 554.00 \\
 & & & qc-kmeans\_noise & 1.854E+08& 96& 3,996.00& 8.953E+07& 160& 6{,}589.69& 4.332E+07& 320& 12{,}340.62\\
\midrule
rds & 50,000 & 3 &  $k$-Means& 447.32 & - & - & 268.01 & - & - & 129.86 & - & - \\
 & & & Classical CKM& 447.32& 36& - & 272.99& 60& - & 145.82& 120& - \\
 & & & qc-kmeans & 557.00& 36 & 71.71 & 307.30 & 60 & 117.51 & 157.84
& 120 & 192.26 \\
 & & & qc-kmeans\_noise & 556.91& 36& 1,703.29& 307.30& 60& 2{,}459.38& 157.84& 120 & 6{,}326.28\\
\midrule
KEGG & 53,413 & 23 &  $k$-Means& 4.901E+08& - & - & 1.884E+08& - & - & 6.051E+07& - & - \\
 & & & Classical CKM& 1.111E+09& 276& - & 1.111E+09& 460& - & 4.627E+08& 920& - \\
 & & & qc-kmeans & 1.470E+09
& 276 & 467.03
& 1.27E+09 & 460 & 776.79 & 4.930E+08& 920 & 1467.12 \\
 & & & qc-kmeans\_noise & 1.473E+09& 276& 11,418.61& 1.270E+09& 460& 17{,}943.50& 4.930E+08& 920& 34{,}500.08\\
\midrule
urbanGB & 360,177 & 2 &  $k$-Means& 4.130E+05& - & - & 2.020E+05& - & - & 8.790E+04& - & - \\
 & & & Classical CKM& 4.166E+05& 24& - & 2.271E+05& 40& - & 9.706E+04& 80& - \\
 & & & qc-kmeans & 4.230E+05
& 24 & 54.23 & 2.050E+05& 40 & 90.29 & 8.880E+04
& 80 & 178.71
\\
 & & & qc-kmeans\_noise & 4.233E+05& 24& 1,157.22& 2.054E+05& 40& 1{,}674.85& 8.881E+04& 80& 3,317.43\\
\midrule
spnet3D & 434,876 & 3 &  $k$-Means& 2.280E+07& - & - & 8.830E+06& - & - & 2.570E+06& - & - \\
 & & & Classical CKM& 2.278E+07& 36& - & 8.826E+06& 60& - & 2.567E+06& 120& - \\
 & & & qc-kmeans & 6.660E+07
& 36 & 76.56
& 5.720E+07
& 60 & 127.44
& 2.580E+07
& 120 & 250.56
\\
 & & & qc-kmeans\_noise & 6.662E+07& 36& 1,727.18& 5.723E+07& 60& 2,469.50& 2.585E+07& 120& 4,818.33\\
\bottomrule
\end{tabular}
}
\end{sc}
\end{table*}

\begin{table*}[htbp]
\caption{Ablation on representative real datasets. We compare \textbf{grouped} (per-group surrogate), \textbf{coupled} (CKM-coupled, sketch-level), and \textbf{exhaustive} (full candidate enumeration, centroid-level). All methods share the same candidate set, one-hot penalty, seeds, and jitter schedule. Reported values are SSE (WCSS) in the original input space 
.}
\label{tab:ablation_result}
\centering
\begin{sc}
\resizebox{\textwidth}{!}{%
\begin{tabular}{lll|ccc|ccc|ccc}
\toprule
 \multirow{2}{*}{Dataset}& \multirow{2}{*}{$n$}& \multirow{2}{*}{$d$}&\multicolumn{3}{c|}{$k{=}3$} & \multicolumn{3}{c|}{$k{=}5$} & \multicolumn{3}{c}{$k{=}10$} \\
  &  &  & grouped & coupled & exhaustive & grouped & coupled & exhaustive & grouped & coupled & exhaustive \\
\midrule
hemi     & 1{,}955  & 7  & 1.826E+7 & 1.836E+7 & 1.836E+7 & 1.522E+7 & 1.512E+7 & 1.512E+7 & 1.067E+7 &\multicolumn{2}{c}{No solution found} \\
pr2392   & 2{,}392  & 2  & 2.459E+10 & 2.656E+10 & 2.656E+10 & 1.288E+10 & 1.335E+10 & 1.364E+10 & 6.116E+9 & \multicolumn{2}{c}{No solution found} \\
AC\_FL   & 7{,}195  & 22 & 2,098.28  & 2,100.33  & 2,100.33  & 1,586.38  & 1,587.42  & 1,587.42  & 1,172.10  & \multicolumn{2}{c}{No solution found}  \\
rds\_cnt & 10{,}000 & 4  & 4.206E+7 & 4.086E+7 & 4.086E+7 & 3.635E+7 & 3.614E+7 & 3.614E+7 & 2.151E+7 & \multicolumn{2}{c}{No solution found}  \\
HTRU2\_L & 17{,}898 & 8  & 1.851E+8 & 1.852E+8 & 1.852E+8 & 8.969E+7 & 8.821E+7 & 8.821E+7 & 4.183E+7 & \multicolumn{2}{c}{No solution found}  \\
KEGG\_FL & 53{,}413 & 23 & 1.461E+9 & 1.466E+9 & 1.466E+9 & 1.265E+9 & 1.265E+9 & 1.265E+9 & 4.865E+8  & \multicolumn{2}{c}{No solution found}  \\
\bottomrule
\end{tabular}}
\end{sc}
\end{table*}



On synthetic datasets, we benchmark qc-kmeans against q-k-Means \cite{poggiali2024hybrid}, q-means \cite{doriguello2023you}, and a hybrid QUBO, evaluating SSE, peak logical qubits ($q_{\mathrm{peak}}$), and wall-clock time (Table~\ref{tab:baselines}). To ensure comparability, all runs omit noise and use a uniform 4-hour budget. We report only converged runs, and we exclude q-means runtime as it assumes QRAM, which is unavailable in our setup. Within this common setup, qc-kmeans consistently attains the lowest SSE and fastest times while keeping $q_{\mathrm{peak}}\le 9$, whereas q-means shows rising SSE as $n$ increases despite using more qubits, and q-k-Means holds $q_{\mathrm{peak}}=5$ but yields poorer SSE and much longer times as $k$ and $n$ grow. Turning from headline results to robustness, varying the largest-instance $q_{\mathrm{peak}}$ and its frequency $m$ has little effect on average SSE within each dataset (Fig.~\ref{fig:comparison_sse_runtime_m_qpeak}).
This suggests that, in the qubit range considered, increasing $q_{\mathrm{peak}}$ does not necessarily lead to consistent improvements in the SSE error metric across all datasets. The average runtime tends to increase with $q_{\mathrm{peak}}$ within each dataset, though the magnitude of this increase varies between datasets.

\subsection{Robustness to Noise} \label{ssec:robustness}

To evaluate scalability and robustness of our algorithm under simulated quantum hardware, we benchmark qc-kmeans on nine real-world datasets using classical $k$-means~\cite{lloyd_least_1982} and compressive $k$-means (CKM)~\cite{Keriven2017Compressive}. Leveraging the fixed-size sketch $z_X$, both qc-kmeans variants can handle datasets with $>400{,}000$ samples and $d$ up to 23, with a maximum per-dataset runtime of 8 minutes in the ideal-hardware setting. Qc-kmeans approaches the solution quality of classical baselines on several datasets, including \texttt{PR2392}, \texttt{KEGG}, \texttt{RDS}, and \texttt{URBANGB}.
We also assess noise sensitivity of qc-kmean using the \texttt{FakeMelbourneV2} backend, which emulates the behavior of an IBM Quantum 14-qubit device. The wall-clock budget was increased to 12 hours to account for the additional cost of transpilation, sampling (shots), and noisy objective evaluations under calibrated noise models. Across all datasets and $k$ settings, Table~\ref{tab:resuls_syn} shows small SSE differences between ideal and noisy simulations, indicating low sensitivity to realistic noise; in a few cases, noise even yields slight SSE improvements. These results support the viability of qc-kmeans on NISQ devices: within a fixed resource budget and constant-qubit, shallow-depth regime, the method consistently converges and shows low drift under noise. 

\subsection{Ablation Studies}\label{ssec:ablation}
We evaluate the effect of the per-group decomposition by comparing the grouped formulation with the coupled and exhaustive references. Table~\ref{tab:ablation_result} reports results in terms of SSE (WCSS).
On six representative datasets with $k\in\{3,5\}$, the gap between grouped and coupled SSE remains consistently small. 
The exhaustive results closely follow the coupled ones on datasets such as AC\_FL and KEGG\_FL, confirming that both capture the same surrogate structure, with minor discrepancies when SSE is measured in the original space. For $k=10$, only grouped results are available since the other variants become intractable due to the combinatorial growth in $\prod_g D_g$.
These findings indicate that the per-group relaxation maintains the fidelity of the clustering objective while enabling a decomposition suitable for NISQ execution.

\section{Discussion}
\paragraph{Limitations} \label{ssec:limitations}
On current NISQ hardware, coherence times and gate errors constrain practical QAOA depth to very small values, which can degrade solution quality on more complex or tightly constrained instances. Our evaluation includes a noise simulation, but a systematic assessment on real hardware remains open. Additionally, we acknowledge that our hybrid method can inherit known $k$-means issues. These limitations motivate future work on error mitigation and parameter transfer for deeper $p$, hardware runs, and more robust initialization/mixing strategies.

\paragraph{Baseline Selection} \label{ssec:baseline_selection} 
For q-means and q-k-Means, we report SSE and qubits and omit runtime when timing is not directly comparable. At the target sizes, scalable public implementations of these quantum baselines are not available due to NISQ-era constraints (circuit width/depth and sampling budgets) and implementation limits, so several variants do not compile or converge within the budget. Our hybrid design does not provide theoretical or empirical speedups versus classical k-means/CKM, and we expect classical methods to remain preferable when quantum resources are unnecessary. The value here is constant qubit width and shallow depth aligned with NISQ feasibility. 

\color{black}
\section{Conclusion} 
\label{sec:conclusion}
We introduced qc-kmeans, a NISQ-friendly hybrid approach that compresses a dataset into a fixed-size Fourier-feature sketch and delegates the discrete centroid selection to a shallow QAOA with an XY mixer under one-hot constraints. 
This formulation decouples quantum register width from the number of samples and keeps the quantum subproblem independent of $N$, while seeding and reassignment remain $N$-dependent. Across synthetic and real datasets, the method attains competitive clustering quality while using constant-width, shallow circuits and shows robustness under noise simulation compared to other quantum baselines. 

\color{black}
\bibliographystyle{IEEEtran}
\bibliography{icml_paper}

\end{document}